\newtheorem{theorem}{Theorem}
\newtheorem{definition}[theorem]{Definition}
\newtheorem{lemma}[theorem]{Lemma}
\newtheorem{proposition}[theorem]{Proposition}
\newtheorem{remark}[theorem]{Remark}
\numberwithin{theorem}{section}
\newcommand\EE {\mathbb E}
\newcommand\RR {\mathbb R}
\def\bone{\mathbf{1}}
\def\tmu{\tilde{\mu}}
\def\tV{\tilde{V}}
\begin{document}

\title{ Optimal investment for all time horizons and Martin boundary of space-time diffusions}
\author{Sergey Nadtochiy\thanks{
sergeyn@umich.edu; Department of Mathematics, University of Michigan.} \  and Michael Tehranchi\thanks{
m.tehranchi@statslab.cam.ac.uk; Statistical Laboratory, University of Cambridge.}
\thanks{
The results of this paper were first presented at the conference Advances in Portfolio Theory and Investment Management, in the Oxford-Man Institute, Oxford, UK, in May 2011. Both authors acknowledge the role of this meeting in the exchange of ideas that resulted in this work.}}
\date{This version: December 17, 2013\\ First draft: October 5, 2012}
\maketitle

\begin{abstract}
This paper is concerned with the axiomatic foundation and explicit construction of a general class of optimality criteria that can be used for investment problems with multiple time horizons, or when the time horizon is not known in advance. Both the investment criterion and the optimal strategy are characterized by the Hamilton-Jacobi-Bellman equation on a semi-infinite time interval. In the case when this equation can be linearized, the problem reduces to a time-reversed parabolic equation, which cannot be analyzed via the standard methods of partial differential equations. Under the additional uniform ellipticity condition, we make use of the available description of all minimal solutions to such equations, along with some basic facts from potential theory and convex analysis, to obtain an explicit integral representation of all positive solutions. These results allow us to construct a large family of the aforementioned optimality criteria, including some closed form examples in relevant financial models. 
\end{abstract}

{\bf Keywords:} \begin{small} Preferences, state-dependent utility, time-consistency, forward performance process, time-reversed HJB equation, Widder's theorem, Martin boundary.\end{small}


\section{Introduction}

The classical investment problem (also known as the Merton's problem) is concerned with the optimal allocation of investor's capital among available financial instruments.
The precise understanding of this statement depends on the notion of optimality employed by the decision maker. We consider the optimality criteria that are based on the characteristics of the terminal wealth generated by each strategy. In the academic literature, these characteristics are, usually, summarized in the expectation of a \emph{utility} function of the terminal wealth. More precisely, the investor (agent) chooses a utility function, along with an investment horizon, say $T$, and maximizes the expectation of this function applied to the terminal wealth payoff at time $T$ (represented by a random variable on some probability space), over all attainable payoffs\footnote{See \cite{Ekeland}, for an equilibrium approach, which does not require an optimality criterion.}. One of the main advantages of this approach is the existence of its \emph{axiomatic justification}. Assume that investor has \emph{preferences} on the set of all terminal payoffs (random variables, or, distributions), which form a \emph{complete order}: for any given pair of payoffs, the investor either prefers one to the other, or is indifferent between the two (cf. \cite{Cantor}). Then, the celebrated Von NeumannÐMorgenstern theorem (cf. \cite{vNM}) shows that, if this complete order satisfies several intuitive axioms, it has to be represented by an expected utility. In other words, there exists a utility function, such that, between any two payoffs, the investor always prefers the one with larger expected utility. There exist several variations in the choice of the axioms and in the properties of the resulting utility functions: see, for example, \cite{Bernoulli1738}, \cite{deFinetti}, \cite{Savage}, \cite{Jensen}. However, the most common set of axioms is the one due to Von Neumann and Morgenstern, and it consists of \emph{transitivity}, \emph{continuity} and \emph{independence} (cf. \cite{Jensen}). The \emph{risk aversion} axiom is often added to ensure that the diversification of a portfolio is encouraged in the resulting optimal investment problem and, in particular, the associated utility function is \emph{concave}. Once the set of axioms is chosen, we may assume, without loss of generality, that the investor's preferences on the set of terminal payoffs are determined by a utility function. Having chosen the appropriate utility function, we, then, solve the associated stochastic optimization problem to find the optimal strategy. Such problems have been widely studied under rather general assumptions on the market model and constitute one of the most active areas of research in modern theory of mathematical finance (see, for example, \cite{merton}, \cite{merton-a}, \cite{KramkovSchacher}, \cite{KramkovSirbu}, \cite{KarLehShreve}, \cite{Pliska}).

In a model where the investment decision is only made once, the outcome of agent's decision is a global trading strategy, which runs up until the terminal time horizon. Then, the optimal strategy is chosen at the initial time as the one that maximizes the expected utility of terminal wealth. However, such a definition of optimal strategy is not natural if the investment decisions are made at multiple times. Indeed, in the latter case, the outcome of every decision is a local investment strategy, which prescribes the actions in the next time period only and results in a random set of future investment opportunities, rather than in terminal payoff itself. Therefore, at each decision time, the agent needs to have a family of preferences on the associated space of set-valued random variables. The resulting family of dynamic preferences, also, has to be non-contradictory across time, or \emph{time-consistent}. Put simply, time-consistency means that the investor ``does not regret" her past decisions. It is best described in \cite{KrepsPorteus}, where time-consistency is postulated as one of the axioms, and the representation of all dynamic preferences satisfying these axioms, also known as \emph{recursive utilities}, is developed. In the context of expected utility, it is natural to construct the dynamic preferences by evaluating each local strategy (investment plan over the next time period) as the maximum expected utility of terminal wealth over all future (global) strategies that coincide with the chosen local strategy in the next time period. The \emph{dynamic programming principle}, when it holds, ensures the time-consistency of the resulting family of dynamic preferences. In fact, it also shows that the single-decision utility maximization problem (where the global strategy is chosen at the very beginning) is equivalent to a time-consistent multi-decision optimization problem, in which the optimality of the strategy is re-evaluated at each decision time. Such an equivalence (i.e. the dynamic programming principle) turns the problem of optimal investment into a stochastic control problem, described, for example, in \cite{FlemingSoner}, \cite{KrylovOptim}.

Despite the presence of an axiomatic foundation and the existence of the dynamic programming principle, the optimality criterion based on maximum expected utility has significant limitations. One of its biggest shortcomings is the fact that only the wealth payoff at a fixed time $T$ is taken into account when making the investment decision. In practice, one may want to consider additional properties of the wealth process: for example, its marginal distributions at {\bf all} time horizons $T>0$. The latter choice may be reasonable if, for example, the terminal time horizon is not known in advance. It is well known that expected utility cannot be easily generalized to the case of unbounded time horizons (except for some specific constructions). To illustrate the difficulty, assume that investor has chosen a time horizon $T$, along with a utility function $U$, and has solved the resulting optimization problem obtaining the optimal investment strategy on the time interval $[0,T]$. Assume, further, that ``life does not end" at $T$. Then, the investor chooses a longer time horizon $T'>T$, along with a new utility function $U'$, and constructs the optimal strategy on $[T,T']$. However, by doing this, the investor would like to ensure that her present decisions do not contradict the future ones. In other words, $U'$ should be such that the already implemented strategy, on the time interval $[0,T]$, together with the new optimal strategy, between $T$ and $T'$, form an overall optimal investment strategy on $[0,T']$, as viewed from the initial time. It turns out that the existence of a $U'$ that satisfies this time-consistency property cannot be guaranteed for an arbitrary choice of $U$. Another shortcoming of the classical approach, which is one of the main reasons why it has not become popular among practitioners, is the assumption that the investor's utility function at a (possibly remote) terminal time horizon is known at the initial time. Even though there exist several methods for inferring the investors' preferences from their actions, these methods become less reliable as the time horizon increases.

In order to address the above shortcomings, Henderson \& Hobson and Musiela \& Zariphopoulou, independently, introduced an alternative optimality criterion for the investment problem (cf. \cite{HendersonHobson}, \cite{mz-platen} and \cite{mz-bespoke}).
The associated criterion is developed in terms of a stochastic field, indexed by $T\in\left(0,\infty \right)$ and by the wealth argument $x\in(0,\infty)$, and it is called the \emph{forward investment performance process} (FIPP). The new criterion allows to produce a time-consistent investment strategy that maximizes the expected utility of wealth at every time horizon $T>0$, providing a natural extension of the classical approach.
At the same time, in contrast to the classical framework, the new approach only requires the investor to specify her risk preferences at the very beginning of the trading period and not at a (possibly remote) future time horizon.


\subsection{Forward investment performance process: axiomatic justification}
\label{subse:forwPerf.Ax}

As soon as we deviate from the classical framework and agree that our investment decision should depend on the marginal distributions of the wealth at all times $T>0$, it becomes natural to assume the existence of a family of preferences for the wealth level at every $T>0$. In other words, we assume that, for each $T>0$, there exists a complete order on the space of random variables representing the wealth payoff at time $T$. Assuming, in addition, that these preferences satisfy the usual axioms of Von Neumann and Morgenstern, we conclude that, for each $T>0$, there exists a utility function $U_T$ representing these preferences. Notice, however, that the family of utility functions $\left\{U_T\right\}_{T>0}$ does not represent a complete order on the space of wealth processes. Indeed, for a given pair of wealth processes, the payoff of the first process, at a certain time horizon, may have a higher expected utility than the payoff of the second one, while the opposite relation may hold at a different time horizon. Nevertheless, such a family of preferences may still admit an extremal element -- the wealth process that maximizes all the expected utilities and that can be attained by a strategy which is time-consistent for all time horizons.

Unfortunately, it turns out that there are not many families of classical utility functions that admit an extremal element in the above sense. This is why we have to extend the classical notion of utility function and consider the \emph{state-dependent utilities} (also known as \emph{stochastic utilities}). Notice that the axioms of Von Neumann and Morgenstern are formulated for a space of distributions, and, in particular, the resulting preferences, based on expected utility, only take into account the distribution of the terminal wealth. However, in practice, the investor's preferences often depend upon the joint distribution of the target random variable, say $X_T$, and an additional stochastic factor $Y_T$. For example, the payoff of an investment strategy may be evaluated relative to the inflation factor, or to the overall market performance. If these preferences satisfy the axioms of Von Neumann and Morgenstern (now stated for the pair of random variables $(X_T,Y_T)$), they have to be given by an expected utility, $\EE U(X_T,Y_T)$. Then, the utility function $U(\cdot\,,Y_T)$ is called a state-dependent (or, stochastic) utility. Since the distribution of $Y_T$ is usually specified in the underlying stochastic model (e.g. stochastic volatility), the search for optimal joint distribution of $(X_T,Y_T)$, in fact, reduces to the search for optimal family of conditional distributions of $X_T$, given $Y_T$. Thinking of $Y_T$ as the state, the name of state-dependent utility becomes clear, as it describes the investor's preferences conditional on the state. Using the traditional probabilistic notation, we can also view the state-dependent utility is a random function $U(x,\omega)$, measurable with respect to a given sigma-algebra (generated by $Y_T$). A detailed description of the theory of state-dependent utility can be found in \cite{Dreze}, \cite{Karni1983}, \cite{Karni1985}.

Put simply, the forward investment performance process is a family of state-dependent utilities, indexed by the time horizon $T>0$, and conditioned to admit an optimal investment strategy which maximizes all the expected utilities and which is time-consistent for all time horizons. As mentioned above, such a family of utility functions, typically, does not produce a complete order on the set of available investment strategies (or, the set of attainable wealth processes). It corresponds to the case when the agent does not have preferences over the entire space of strategies (not every two strategies are comparable), but, for any given time horizon $T$ and any state of the relevant market factor $Y_T$, the investor can compare the conditional performance of any two strategies at this time horizon. More precisely, we assume that, for any $T>0$, the investor has a complete order on the space of joint distributions of the wealth process and the relevant stochastic factor, at time $T$, and this order satisfies the axioms of Von Neumann and Morgenstern. Requiring, in addition, the existence of a joint time-consistent optimal strategy for all these preferences, we obtain a forward investment performance process.

\begin{remark}
It is worth mentioning that the concept of recursive utility, introduced in \cite{KrepsPorteus} and \cite{DuffieEpstein}, does not require the axiom of independence and, as a result, produces a very general class of preferences. In particular, the resulting preferences may take into account a wide range of properties of the wealth process -- not only its marginal distributions -- while remaining time-consistent.
However, just like the classical approach, the general recursive utility theory has only been developed for finite time horizons (although some specific constructions for the infinite time horizon are possible). From this point of view, the forward investment performance theory offers something new: its entire purpose is to describe a general class of optimality criteria defined for all positive time horizons, staying as close as possible to the classical theory.
\end{remark}


\subsection{Forward investment performance process: formal definition}\label{subse:forwPerf}

We assume that the market consists of a bank account, whose value, without any loss of generality, stays constant,
and $k$ risky assets $S=\left(S^i,\ldots,S^k\right)$, whose prices are adapted c\`adl\`ag semimartingales on a stochastic basis $\left(\Omega,\mathbb{F}=\left(\mathcal{F}_t\right)_{t\geq0},\mathbb{P}\right)$. All stochastic processes introduced below are defined on this stochastic basis.
As usual, by an investment strategy, or a \emph{portfolio}, we understand a vector $\pi=\left(\pi^1,\ldots,\pi^k\right)^T$ of predictable stochastic processes, integrable with respect to $S$. The investor starts from initial wealth level $x>0$ and allocates her wealth dynamically among the risky securities and the bank account, so that $\pi^i_t$ represents the proportion of her wealth invested in $S^i$ at time $t$. Then, due to the self-financing property, her cumulative wealth process $X^{\pi,x}$ is given by
\begin{equation*}
dX_{t}^{\pi,x}= X_{t}^{\pi,x}\pi^T_t dS_{t},\,\,\,\,\, X^{\pi,x}_0=x,
\end{equation*}
provided $\pi$ is $S$-integrable and locally square integrable.
It is sometimes necessary to consider an even smaller set of portfolios. Hence, we denote by $\mathcal{A}$ the set of \emph{admissible} portfolios, which is a subset of $S$-integrable and locally square integrable processes $\pi$. In addition, we introduce the following notation: $\RR_+ = [0,\infty)$.

\begin{definition}\label{def:FPP}
Given a market model, as above, and a set of admissible portfolios $\mathcal{A}$, a progressively measurable random function $U:\Omega\times\RR_+\times(0,\infty)\rightarrow\RR$ is a forward investment performance process if:


i) Almost surely, for all $t\geq 0$, the mapping $x\rightarrow U_t\left(x\right)$ is concave and increasing;


ii) For any $x>0$ and any $\pi \in \mathcal{A}$, the process $\left(U_t\left( X_{t}^{\pi,x }\right)\right)_{t\geq0}$
is a supermartingale;


iii) For any $x>0$, there exists a portfolio $\pi^{*}\in \mathcal{A}$, such that
$\left(U_t\left( X_{t}^{\pi^{*},x}\right)\right)_{t\geq0}$ is a martingale.
\end{definition}

The property $i)$, in the above definition, simply states that the forward investment performance process is a family of state-dependent utilities, defined for all positive time horizons. The other two properties ensure that this family of utility functions has a unique time-consistent maximizer: an attainable wealth process which maximizes the expected utilities in the given family, for all positive time horizons and initial investment times.

Describing explicitly the space of random functions $U_t(x)$ that satisfy the above
definition is still an open problem, but some results in this direction can be found, for example, in 
\cite{HendersonHobson}, \cite{Berrier}, \cite{ElKaroui}, \cite{ElKaroui.2}, \cite{mz-spde} and \cite{zitkovic}.
In order to present more specific results in this direction, we have to make some additional assumptions on the market model.
In particular, we assume that the filtration $\mathbb{F}$ is
generated by $W$, a standard Brownian motion in $\RR^d$.
In addition, we assume that $S$ is an It\^o process in $\RR^k$ with positive entries, given by
\begin{equation}\label{eq.marketModel.1}
d\log S_{t}=\mu_t dt + \sigma^T_t dW_t,
\end{equation}
where the logarithm is taken entry-wise, $\mu$ is a locally integrable stochastic process with values in $\RR^k$, and $\sigma$ is a $d\times k$ matrix of locally square integrable processes. We use the notation "$A^T$" to denote the transpose of a matrix (vector) $A$.
We introduce the $d$-dimensional stochastic process $\lambda$, frequently called the market price of risk,
via
\begin{equation}\label{lamba.1}
\lambda_t := \left(\sigma^T_t\right)^+ \tmu_t ,
\end{equation}
where $(\sigma^T_t)^+$ is the Moore-Penrose pseudo-inverse of the matrix $\sigma^T_t$, and $\tmu$ is the drift of $S$: $\tmu^i_t=\mu^i_t + \|\sigma^i_t\|^2/2$, for $i=1,\ldots,k$, with $\sigma_t^i$ being the $i$-th column of $\sigma_t$.
In particular, we have
\begin{equation*}
\sigma^T_t \lambda_t = \tmu_t
\end{equation*}
The existence of such a process $\lambda$ follows from the absence of arbitrage in the model.
Notice that, in this case, the cumulative wealth process $X^{\pi,x}$ is given by
\begin{equation*}
dX_{t}^{\pi,x}= X_{t}^{\pi,x}\pi^T_t \sigma^T_t \lambda_t dt + X_{t}^{\pi,x}\pi^T_t \sigma^T_t dW_{t},\,\,\,\,\, X^{\pi,x}_0=x,
\end{equation*}
for any locally square integrable process $\pi$.

Recall that the value function in the classical utility maximization problem, at least formally, solves the Hamilton-Jacobi-Bellman (HJB) equation. It turns out that the following stochastic partial differential equation (SPDE) is an analog of the HJB equation in the forward performance theory:
\begin{equation}\label{SPDE}
dU_t(x) = \frac{1}{2}\frac{\|\partial_xU_t(x)\lambda_t + \sigma_t \sigma^+_t \partial_x a_t(x)\|^2}{\partial^2_xU_t(x)} + a^T_t(x) dW_t,
\end{equation}
where $a_t(x)$ is a $d$-dimensional vector of progressively measurable random functions, continuously differentiable in $x$, which is called a \emph{volatility of the forward performance process}.

Recently, it was shown in \cite{mz-spde}, \cite{zar-RICAM}, and later in \cite{ElKaroui}, \cite{ElKaroui.2}, that, if
$U$ is a twice continuously differentiable stochastic flow (see, for example, \cite{Kunita} for the definition), which satisfies the above SPDE, then,
for any admissible portfolio $\pi$, the process $\left(U_t\left( X_{t}^{\pi,x }\right)\right)_{t\geq0}$ is a local supermartingale (in the sense that there exists a localizing sequence that makes it a supermartingale), and, if, for any initial condition $X^*_0>0$, there exists a strictly positive process $X^*$ satisfying
\begin{equation}\label{wealth-sde}
dX_{t}^* = X_{t}^*\left(\sigma_t\pi^*_t(X^*_t)\right)^T \lambda_t dt + X_{t}^*\left(\sigma_t \pi^*_t(X^*_t)\right)^T dW_{t},
\end{equation}
with
\begin{equation} \label{portfolio-sde}
x \sigma_t \pi _{t}^*(x) = -\frac{\lambda_t \partial_x U_{t}(x) + \sigma_t \sigma^+_t \partial_x a_{t}(x) }{\partial^2_x U_{t}(x) },
\,\,\,\,\,\,\,\,\,\,\,\,\forall x>0,
\end{equation}
then $\left(U_t\left( X_{t}^{*}\right)\right)_{t\geq0}$ is a local martingale.
Of course, according to the definition, the local supermartingale and martingale properties are not sufficient for $U$ to be a forward performance process.
Therefore, having solved the above SPDE (\ref{SPDE}) and constructed the optimal wealth via (\ref{wealth-sde}), one still needs to verify that the resulting process is, indeed, a forward investment performance process (this is analogous to the verification procedure in the classical utility maximization theory). For example, one way to ensure that a local supermartingale $\left(U_t\left( X_{t}^{\pi,x}\right)\right)_{t\geq0}$ is a true supermartingale, is to construct $U$ so that $\inf_{t,x}U_t(x)$ is bounded from below by an integrable random variable. Then, in addition, one can show by a standard argument that the local martingale $\left(U_t\left( X_{t}^{*}\right)\right)_{t\geq0}$ is a true martingale if and only if its expectation at any time coincides with its value at zero.


\subsection{Representation of forward performance processes}

Notice that equation (\ref{SPDE}) may be used to describe the forward performance processes through the volatility $a$. On the other hand, it is not clear what are the admissible choices of volatility -- the ones for which equation (\ref{SPDE}) has a solution. In fact, it is not even clear which ``constant" volatilities (increasing and concave deterministic functions of $x$) are admissible. On the other hand, the results of \cite{ElKaroui.2}, given below, show that there exists a class of volatility processes (although defined in a rather implicit way), for which (\ref{SPDE}) admits a unique solution, for any initial condition satisfying some smoothness and boundedness constraints.
More precisely, it was shown in \cite{ElKaroui.2} that, for any regular enough stochastic flows $\pi^*_t(x)$ and $\nu^*_t(x)$, if the volatility $a$ is specified in the following functional form:
\begin{equation}\label{eq.vol.ElKaroui}
a_t(x) = F\left(t,x,\partial_xU_t(\cdot),\partial^2_xU_{t}(\cdot),\lambda_t,\pi^*_t(\cdot),\nu^*_t(\cdot)\right),
\end{equation}
where $F$ is a given deterministic operator (the same for all choices of $a$), then, there exists a solution to (\ref{SPDE}), for any initial condition $U_0(x)$, which is strictly concave, increasing, satisfies certain smoothness conditions, and takes value zero at $x=0$. In addition, if the resulting solution $U$ is a true forward performance process (i.e. if the local martingale and supermartingale properties are, in fact, global), then the corresponding optimal portfolio is given by $\pi^*$. It is suggested by the authors of \cite{ElKaroui.2} that the above result can be used to solve the problem of \emph{inferring the investor's preferences}. One can, in principle, observe the investor's optimal portfolio $\pi^*$ on some ``test" market and construct the forward performance process $U$ that reproduces this optimal portfolio. Then, naturally, the constructed forward performance process should be used to determine the optimal portfolio in a target market (with different assets and/or a different set of admissible portfolios). However, in a different market, with a different set of attainable wealth processes, the random field $U$ may (and typically does) fail to satisfy the last two properties in Definition $\ref{def:FPP}$ (notice that the definition depends upon the set of available wealth processes). Hence, it fails to produce a time-consistent optimality criterion in the new market.

Even though, at this stage, it is still not clear how to infer investor's preferences using the forward performance theory, the results of \cite{ElKaroui.2} provide analytical representation of a class of forward performance processes. Namely, for a given set of attainable wealth processes $\mathcal{A}$, the forward performance process is described via $\pi^*$, $\nu^*$, and $U_0$. Such a description, definitely, constitutes an important result in the theory of forward performance processes. In particular, it shows that, for any regular enough portfolio process (represented as a random field), there exists a forward performance process that makes the given portfolio optimal. However, from a practical point of view, the assumption that the optimal portfolio $\pi^*$ is known before the optimality criterion is constructed may not always be natural. For example, in the standard optimal investment problem, one uses the optimality criterion in order to construct the optimal portfolio. In addition, the random field $\nu^*$ lacks a clear economic interpretation (although it can be described mathematically, via the dual problem), which makes it difficult to specify its values in particular applications. Therefore, in this paper, we use a different approach to describe the forward performance processes, which is based on the axiomatic justification presented in Subsection $\ref{subse:forwPerf.Ax}$, rather than on the volatility $a$.

Recall that a forward performance process is defined for a given set of attainable strategies $\mathcal{A}$. Therefore, it is natural to think of it as a pair $(U,\mathcal{A})$ that satisfies Definition $\ref{def:FPP}$.
However, in order to give an economic meaning to the forward performance process, one needs to relate it to the investor's preferences on a set of admissible trading strategies. We have accomplished this by identifying a forward performance process with a family of state-dependent utilities. 
A state-dependent utility, in turn, is defined for a given stochastic factor, which causes the state-dependence (or, randomness) of the utility. More precisely, the state-dependent utility represents preferences on conditional distributions, which are constructed by conditioning on the values of the additional stochastic factor. Of course, we {\bf need to define the set of conditional distributions before constructing preferences on it}, or, equivalently, we need to specify the additional stochastic factor before constructing the forward performance process. 
Therefore, in this paper, we propose to identify a forward performance process with a triplet $(U,\mathcal{A},Y)$, where $Y$ represents a stochastic factor that determines the state-dependence of $U$. Namely, we assume that the stochastic field $U$ and the set of attainable claims $\mathcal{A}$ satisfy Definition $\ref{def:FPP}$, and, in addition, $U_t$ is a deterministic function of $(t,x,Y_t)$.
Thus, in order to assign an economic meaning to the forward performance process, we propose that it is defined for a given set of attainable claims $\mathcal{A}$ and for a given stochastic factor $Y$.

Notice that the only novelty of the approach proposed above is in the additional information which is required to identify a forward performance process. 
Namely, the original approach (Definition $\ref{def:FPP}$) requires that the set of attainable claims $\mathcal{A}$ is given, as the additional information needed to identify a forward performance process (i.e. the process is defined for a given $\mathcal{A}$). In the present setting, we require that the stochastic factor $Y_t$, generating the sigma-algebra of $U_t$, is given along with $\mathcal{A}$.
However, in the absence of any assumptions on the stochastic process $Y$, there is no loss of generality in the proposed representation.
To see this, notice that, for any forward performance process $U$, at any time $t$, there exists a random element $Y_t$, such that $U_t$ is a deterministic function of $(t,x,Y_t)$ (e.g. consider the canonical mapping $Y_t: \omega\mapsto\omega$, where only the sigma-algebra of the state space of $Y_t$ changes with $t$). Thus, any possible limitations of the existing framework will arise from the assumptions made on the stochastic factor $Y$, but not from the representation proposed above.

Here, we investigate a regular Markovian case, where the stochastic factor $Y$ is given by a multidimensional diffusion process, and the universe of tradable assets is given by a subset of its components. We say that the associated forward performance process, given by a deterministic function of time, wealth level, and the value of the stochastic factor, is in a \emph{factor form}.
In this case, it turns out that the exact functional relation is determined uniquely by the initial preferences, and, in particular, there is no need to guess the volatility structure of the forward performance process. We characterize the forward performance processes in a factor form via explicit integral representations of the associated positive space-time harmonic functions, and illustrate the theory with specific examples.

The paper is organized as follows. In Subsection $\ref{subse:stochFactor}$, we define the general stochastic factor model, which is a specification of the model described in this section and which remains our framework for the rest of the paper. In Section $\ref{subse:forwardHJB}$ we introduce the forward performance processes in a factor form, as well as the corresponding \emph{time-reversed HJB equation}, and discuss the difficulties associated with it. Sections $\ref{se:linHJB.complete}$ and $\ref{se:linHJB.homothetic}$ demonstrate how, in certain cases, the HJB equation can be reduced to a \emph{backward} linear parabolic equation with \emph{initial} condition.
The main results of this paper are concerned with the representation of positive solutions to the backward linear parabolic equations on the time interval $(0,\infty)$ -- i.e. the \emph{positive space-time harmonic functions}. These results are given in Theorems $\ref{th.strictEllip.rep.abstract}$, $\ref{th:main}$, and $\ref{th:main.2}$ in Section $\ref{se:timeReversedHeat}$. Finally, we consider the closed form examples of forward performance processes in a factor form in Section $\ref{se:examples}$.


\section{Forward performance processes in a factor form}

\subsection{Stochastic factor model}\label{subse:stochFactor}

We assume that the price process of risky assets $S=\left(S^1,\ldots,S^k\right)^T$ is determined by the $n$-dimensional ($n\geq k$) Markov system of stochastic factors $Y=\left(Y^1,\ldots,Y^n\right)^T$. This system is defined on a stochastic basis which supports a $d$-dimensional Brownian motion $B=\left(B^1,\ldots,B^k\right)^T$, via
\begin{equation}\label{eq.marketModel.2}
dY_{t}=\mu(Y_t) dt + \sigma^T(Y_t) dW_t,
\end{equation}
where, with a slight abuse of notation (compare to (\ref{eq.marketModel.1})), we introduce $\mu\in C\left(\RR^n\rightarrow\RR^n\right)$ and $\sigma\in C\left(\RR^n\rightarrow\RR^{d\times n}\right)$, and denote by $\RR^{d\times n}$ the space of $d\times n$ real matrices. We also assume that functions $\mu$ and $\sigma$ are such that the above system has a unique strong solution for any initial condition $y\in\RR^n$. The first $k$ components of $Y$ are interpreted as the logarithms of the tradable securities $S$:
\begin{equation*}
S^i_t = \exp\left(Y^i_t\right),\,\,\,\,\,\,i=1,\ldots,k,
\end{equation*}
and the rest $n-k$ components are the observed, but not tradable, stochastic factors.
In particular, we obtain
\begin{equation*}
dS^i_{t}=S^i_t\tmu^i(Y_t) dt + S^i_t \left(\sigma^i(Y_t)\right)^T dW_t,\,\,\,\,\,\,i=1,\ldots,k,
\end{equation*}
where $\sigma^i(y)$ is the $i$-th column of $\sigma(y)$, and
\begin{equation*}
\tmu^i(y)=\mu^i(y) + \|\sigma^i(y)\|^2/2,\,\,\,\,\,\,\, \forall i=1,\ldots,n
\end{equation*}

Recall that, in this case, the \emph{market price of risk} is given by $\lambda_t=\lambda(Y_t)$, where $\lambda\in C\left(\RR^n\rightarrow\RR^d\right)$ satisfies
\begin{equation}\label{lamba.2}
\left(\sigma^i(Y_t)\right)^T \lambda \left( Y_t\right) = \tmu^i\left(Y_t\right),\,\,\,\,\,\,\forall i=1,\ldots,k
\end{equation}

Given a portfolio $\pi=\left(\pi^1,\ldots,\pi^k\right)^T$, with each $\pi^i$ being a progressively measurable stochastic process with values in $\RR$, we will identify it with the extended $n$-dimensional vector $\left(\pi^1,\ldots,\pi^k,0,\ldots,0\right)^T$ and hope this will not cause any confusion.
Consider an arbitrary dynamic self-financing trading strategy, which starts from initial level $x>0$ and, at each time $t$, prescribes to keep the fraction $\pi^i_t$ of the total wealth invested in $S^i$ (for each $i=1,\ldots,k$). 
Then, the cumulative wealth process of this strategy is given by
\begin{equation*}
dX_{t}^{\pi,x}= X_{t}^{\pi,x}\pi^T_t \tmu(Y_t) dt + X_{t}^{\pi,x}\pi^T_t \sigma^T(Y_t) dW_{t} = X_{t}^{\pi,x}\left(\sigma(Y_t)\pi_t\right)^T \lambda(Y_t) dt
+ X_{t}^{\pi,x}\left(\sigma(Y_t)\pi_t\right)^T dW_{t}
\end{equation*}


\subsection{Time-reversed HJB equation}\label{subse:forwardHJB}

As it was previously announced, we now assume that there exists a function $V:\RR_{+}\times\RR^{n}\times(0,\infty)\rightarrow\RR$, such that the forward performance process $U$ is given in the following \emph{factor form}
\begin{equation}\label{eq.funcForwPerf.def}
U_t\left(x\right) =V\left(t,Y_{t},x\right),
\end{equation}
where $Y$ is defined in (\ref{eq.marketModel.2}).
Our goal is to describe explicitly (in a way which is well suited for implementation) a large class of functions $V$
such that $U$, defined by (\ref{eq.funcForwPerf.def}), is, indeed, a forward performance process.

Assuming enough smoothness, we apply the Ito's formula to $V\left(t,Y_{t},x\right)$ and equate the drift and local martingale terms to those in (\ref{SPDE}). As a result, we obtain the volatility of the forward performance process in a factor form,
\begin{equation*}
a_t\left( x\right) = \sigma(Y_t) D_yV(t,Y_t,x),
\end{equation*}
and derive the following partial differential equation:
\begin{equation}\label{eq.illPosedHJB}
V_t + \max_{\pi\in \RR^k\times\left\{0\right\}^{n-k}} \left[ \left(V_x \lambda +  \sigma D_y V_x \right)^T \sigma \pi +
\frac{1}{2} V_{xx} (\sigma \pi)^T \sigma \pi\right]
+ \frac{1}{2} \text{tr} \left( D_y^2 V \sigma^T \sigma \right) + \left(D_y V\right)^T \mu=0,
\end{equation}
for $(t,y,x)\in(0,\infty)\times\RR^n\times(0,\infty)$. Here, we denote by $D_y V$ the gradient of $V$ (the vector of partial derivatives), and by $D^2_y$ the Hessian of $V$ (the matrix of second order partial derivatives), with respect to $y$.
It is not hard to see that, if $V$ solves the above equation, then, $U_t\left(x\right)=V\left(t,Y_{t},x\right)$ satisfies the last two properties of Definition $\ref{def:FPP}$ locally (that is the `martingale' and `supermartingale' properties are substituted, respectively, to the `local martingale' and `local supermartingale' ones). The proof of the latter statement, as well as the derivation of the above partial differential equation (PDE), are rather standard, hence, we omit the details and, instead, refer the interested reader to \cite{ElKaroui}, \cite{ElKaroui.2}, \cite{zar-RICAM}, and \cite{mz-spde}.

Before we proceed to the construction of solutions to (\ref{eq.illPosedHJB}), it is worth mentioning several important features of the above equation.
First, equation (\ref{eq.illPosedHJB}) provides another way to observe similarities between the forward performance processes and the value functions in the classical utility maximization theory. Indeed, the forward performance process in a factor form satisfies the same equation as the value function, except that it does not have a pre-specified terminal condition at a finite time horizon $T$: instead, the solution is supposed to exist on the entire half line $t>0$. It may seem that the above equation can be reduced to a standard HJB equation by the simple change of variables: $t \mapsto \tau=T-t$, with some fixed $T>0$. However, the resulting (standard HJB) equation can only be solved for $\tau> 0$, and, hence, it produces a solution to (\ref{eq.illPosedHJB}) only for $t\in(0,T)$. This is not sufficient, since the main reason to introduce the forward performance process in the first place was to ensure the time-consistency of the resulting optimization criterion on the \emph{entire half line} $t\in(0,\infty)$. Therefore, unlike the classical HJB equation, (\ref{eq.illPosedHJB}) can only be equipped with initial, rather than terminal, condition and has to be solved \emph{forward} in time. For this reason, we call it a {\bf time-reversed HJB equation}. The requirement that equation (\ref{eq.illPosedHJB}) has to be solved on the entire half-line $t>0$ causes many difficulties in constructing the solutions: on top of all the problems associated with the standard HJB equation (i.e. nonlinearity, degeneracy), the problem at hand has to be solved \emph{in a wrong time direction}, which makes it {\bf ill-posed} from the point of view of the classical PDE theory.

Despite all the difficulties outlined above, we manage to construct solutions to the above equation, under some additional assumptions on the market model. In particular, when the market is complete or the preferences are homothetic in the wealth variable, we characterize explicitly the space of all strictly increasing and concave solutions to the above equation, along with the associated initial conditions, $V(0,\cdot,\cdot)$.


\subsection{Linearizing the HJB equation: complete market case.}\label{se:linHJB.complete}

First, we consider the case of a complete market: i.e. we assume that, at each time $t$, the first $k$ columns of $\sigma(Y_t)$ span the entire $\RR^d$.
Then, the maximization problem inside (\ref{eq.illPosedHJB}) can be solved explicitly, and the HJB equation becomes
\begin{equation}\label{eq.illPosedHJB.complete}
V_t -\frac{1}{2} \frac{\|\lambda V_x + \sigma D_y V_x\|^2}{V_{xx}}
+ \frac{1}{2} \text{tr} \left( D_y^2 V \sigma^T \sigma \right) + D_y V^T \mu=0
\end{equation}
It is well-known that the methods of duality theory allow to linearize the above equation (cf. \cite{KarLehShreve}).
These methods are based on the analysis of the Fenchel-Lagrange dual of $V(t,y,\cdot)$, denoted by $\hat{V}(t,y,\cdot)$. In particular, it is a standard exercise to check that the substitute
\begin{equation}\label{eq.u.complete}
u(t,y,z) = -\hat{V}_x(t,y,\exp(z)) =\left(V_x(t,y,\cdot)\right)^{-1}(\exp(z))
\end{equation}
turns the forward HJB equation (\ref{eq.illPosedHJB.complete}) into the following linear equation:
\begin{equation}\label{eq.illPosedHJB.simplest}
u_{t} + \frac{1}{2}\left[\lambda^T\lambda u_{zz} -  2 D_y u^T_{z} \sigma^T \lambda
+ \text{tr} \left( D_y^2u \sigma^T \sigma \right) \right]
+ \frac{1}{2}\lambda^T\lambda u_{z} +  D_y u^T \left(\mu - \sigma^T \lambda\right) = 0,
\end{equation}
for all $(t,y,z)\in(0,\infty)\times\RR^{n+1}$.
If we manage to find a solution to the above equation and ensure that it is strictly positive and decreasing in $z$, we can then proceed backwards via (\ref{eq.u.complete}), to construct function $V$ that solves (\ref{eq.illPosedHJB.complete}). This step may not always be trivial, as the transition from $V_x$ to $V$ requires integration of the PDE for $V_x$ with respect to $x$. However, this method does work if, for example, we manage to derive sufficient a priori estimates of $u(t,y,z)$ and its partial derivatives, as demonstrated in Subsection $\ref{subse:ex.meanRev}$.


\subsection{Linearizing the HJB equation: homothetic preferences.}\label{se:linHJB.homothetic}

The linearization proposed in the previous subsection relies on the completeness of the market but works for an arbitrary forward performance process in a factor form.
Here, on contrary, we consider the (possibly) incomplete market models, while the forward investment performance process is assumed to be \emph{homothetic} in the wealth argument. Such processes are the natural analogues of the popular power utilities.
More precisely, we assume that, for all $(t,y,x)\in\RR_+\times\RR^n\times(0,\infty)$,
\begin{equation}\label{homotheticity}
V\left( t,y,x\right) =\frac{x^{\gamma}}{\gamma} v\left( t,y\right),
\end{equation}
with some function $v:\RR_+\times\RR^n\rightarrow\RR$ and a non-zero constant $\gamma <1$.
In addition, we make the following specification of the general factor model introduced above.
We assume that $n=d=2$, $k=1$, that $\mu$ and $\sigma$ depend only upon the second component of $y$, and the instantaneous correlation between the two columns of $\sigma$ is constant. In other words, we assume that the market consists of a single risky asset, whose dynamics are given by the following two factor model
\begin{equation*}
\left\{
\begin{array}{l}
{dY^1_t=d\log S_{t}=\mu\left( Y^2_{t}\right)dt + \sigma\left( Y^2_{t}\right) dW_{t}^{1},}\\
{}\\
{dY^2_{t}=b\left( Y^2_{t}\right) dt + a\left( Y^2_{t}\right) \left( \rho dW_{t}^{1}+\sqrt{1-\rho ^{2}}dW_{t}^{2}\right)},
\end{array}
\right.
\end{equation*}
with a constant $\rho\in[-1,1]$ and scalar functions $\mu$, $\sigma$, $a$ and $b$, such that the above system has a unique strong solution for any initial condition $(Y^1_0,Y^2_0)\in\RR^2$.
It is shown in \cite{zar-distortion} that, in the notation
\begin{equation*}
u(t,y):=\left(v(t,y)\right)^{1/\delta},
\end{equation*}
with
\begin{equation*}
\delta=\frac{1-\gamma}{1-\gamma +\rho^2\gamma},
\end{equation*}
the HJB equation (\ref{eq.illPosedHJB}) reduces to
\begin{equation}\label{eq.illPosedHJB.homothetic}
u_{t} + \frac{1}{2}a^{2}\left( y\right) u_{yy} + \left( b\left( y\right)
+ \rho \frac{\gamma }{1-\gamma }\lambda \left( y\right) a\left( y\right) \right) u_{y}
 + \frac{1}{2\delta}\frac{\gamma}{1-\gamma }\lambda ^{2}\left( y\right)u=0,
\end{equation}
for all $(t,y)\in(0,\infty)\times\RR^{n}$, where $\lambda(y)=\mu(y)/\sigma(y) + \sigma(y)/2$.
Thus, we have reduced the time-reversed HJB equation (\ref{eq.illPosedHJB}) to a linear parabolic equation. Solving the above equation, we obtain function $u(t,y)$ and, taking its power, recover $v$ and, in turn, $V$.

Notice however, that the above equation, as well as (\ref{eq.illPosedHJB.simplest}), is time-reversed: it has to be solved forward, for $t\in(0,\infty)$, while the associated differential operator corresponds to a backward equation. We would like to emphasize that there is no standard existence theory for such PDEs. Developing some basic existence results for this type of equations is the subject of the next section.


\section{Generalized Widder's theorem as the representation of space-time harmonic functions}\label{se:timeReversedHeat}

In this section, we show how to generate solutions to a class of time-reversed (ill-posed) linear parabolic equations on a semi-finite time interval, which includes (\ref{eq.illPosedHJB.simplest}) and (\ref{eq.illPosedHJB.homothetic}).
These results, in particular, provide an extension of the Widder's theorem on positive solutions to the heat equation (see \cite{Widder1963}). We recall this theorem and provide additional comments further in this section.


\subsection{Uniformly parabolic case}

Here, we consider linear parabolic equations of the form
\begin{equation}\label{eq.reversedHeatEq}
u_t + \mathcal{L}_y u = 0,\,\,\,\,\,\,\,\,(t,y)\in (0,\infty)\times\RR^n,
\end{equation}
with the operator $\mathcal{L}_y$ given by
\begin{equation}\label{eq.Ly.def}
\mathcal{L}_y= \sum_{i,j=1}^n a^{ij}(y) \partial^2_{y^i y^j} + \sum_{i=1}^n b^i(y) \partial_{y^i} + c(y),
\end{equation}
where the functions $a^{ij}$, $b^i$ and $c$ are uniformly H\"older-continuous and absolutely bounded, and such that the matrix
$A=(a^{ij})$ is symmetric and satisfies the \emph{uniform ellipticity} condition:
\begin{equation}\label{eq.strictEllip}
0 < \inf_{\|v\|=1,\,y\in\RR^n} \sum_{i,j=1}^n v_i v_j a^{ij}(y)
\end{equation}
The operator $\mathcal{L}_y$ is, then, called \emph{uniformly elliptic}, and the equation (\ref{eq.reversedHeatEq}) is \emph{uniformly parabolic}.
Notice that (\ref{eq.reversedHeatEq}) can be rewritten as the evolution equation
\begin{equation*}
u_t = -\mathcal{L}_y u,
\end{equation*}
where `$-\mathcal{L}_y$' is an ``anti-elliptic" (positive) operator. According to the classical theory of linear parabolic equations (see, for example, \cite{Evans}), in order to solve the above equation forward in time (with a given initial condition), one needs the operator in the right hand side to be elliptic (negative), and, hence, it cannot be applied in this case. In fact, as we will show later, it is not always possible to construct a solution to the above equation, even for a smooth initial condition satisfying the usual growth constraints (or, having a compact support). Nevertheless, we will provide an explicit description of the space of all initial conditions for which the nonnegative solution to (\ref{eq.reversedHeatEq}) does exist.

To begin, consider the simplest possible form of equation (\ref{eq.reversedHeatEq}):
\begin{equation}\label{eq.revHeat}
u_t + u_{yy} = 0,\,\,\,\,\,\,\,\,(t,y)\in (0,\infty)\times\RR
\end{equation}
As mentioned earlier, the nonnegative solutions of the above equation are completely
characterized by the celebrated Widder's theorem, given below (see Theorem 8.1 in \cite{Widder1963}).


\begin{theorem}\label{th:widder}
(Widder 1963) Function $u:(0,\infty)\times \RR \rightarrow \RR$ is a positive classical solution to (\ref{eq.revHeat}) if and only if it can be represented as
\begin{equation}\label{eq.repres.Widder}
u\left( t,y\right) = \int_{\RR} e^{zy-z^{2}t} \nu\left( dz\right)
\end{equation}
where $\nu $ is a Borel measure, such that the above integral is finite for all $(t,y)\in(0,\infty)\times\RR$.
\end{theorem}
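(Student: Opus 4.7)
The plan is to prove the two implications separately: the ``if'' direction reduces to a direct computation, while the ``only if'' direction will rely on a Choquet / Martin boundary argument whose central difficulty is the classification of the minimal positive solutions.

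For sufficiency, I would first verify by direct differentiation that each exponential $k_z(t,y):=e^{zy-z^2 t}$ satisfies $\partial_t k_z=-z^2 k_z$ and $\partial_{yy} k_z=z^2 k_z$, hence is a positive classical solution of (\ref{eq.revHeat}). Given a Borel measure $\nu$ for which $u(t,y)=\int_{\RR}k_z(t,y)\,\nu(dz)$ is finite at every point of $(0,\infty)\times\RR$, differentiation under the integral sign is justified by the following estimate: at any fixed $(t,y)$, for $\varepsilon\in(0,t)$ and any integer $j\ge 0$,
\[
|z|^j e^{zy-z^2t}=|z|^j e^{-\varepsilon z^2}\,e^{zy-z^2(t-\varepsilon)}\le C_{j,\varepsilon}\,e^{zy-z^2(t-\varepsilon)},
\]
and the right-hand side has finite $\nu$-integral by hypothesis. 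Dominated convergence then yields $\partial_t u+\partial_{yy}u=\int(\partial_t+\partial_{yy})k_z\,\nu(dz)=0$, and positivity is manifest.

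The necessity direction will exploit the convex-analytic structure of the set of positive solutions. Consider the normalized convex set
\[
\mathcal{K}:=\bigl\{u>0:\,u\text{ is a classical solution of }(\ref{eq.revHeat})\text{ with }u(1,0)=1\bigr\}.
\]
Standard parabolic Harnack and interior Schauder estimates (applied to the forward heat equation obtained after the reflection $t\mapsto -t$) imply that any family in $\mathcal{K}$ is locally uniformly bounded together with all of its derivatives, so $\mathcal{K}$ is compact and metrizable in the $C^{2,1}_{\mathrm{loc}}$ topology. Choquet's theorem therefore provides, for each $u\in\mathcal{K}$, a probability measure on the extreme points $\mathrm{ex}(\mathcal{K})$ whose barycenter is $u$; rescaling back gives the positive Borel measure $\nu$ required in (\ref{eq.repres.Widder}) as soon as the extreme points are identified with suitably normalized exponentials $k_z$, $z\in\RR$.

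The main obstacle is precisely this identification of the extreme (equivalently, minimal) positive solutions. To verify that each $k_z$ is minimal, I would suppose $v$ is a positive solution with $v\le C\,k_z$, write $v=w\,k_z$, and substitute into $v_t+v_{yy}=0$; a short computation yields $w_t+w_{yy}+2z\,w_y=0$, so after $t\mapsto -t$, $w$ is a positive bounded ancient solution of a forward drift-diffusion equation on the whole space, which by the parabolic Liouville theorem must be constant; hence $v$ is a multiple of $k_z$. The converse direction, that minimal solutions necessarily take this exponential form, I would establish by exploiting translation invariance in $y$: minimality of $u$ together with a multiplicative-cocycle argument on the cone of positive solutions forces the ratio $u(t,y+h)/u(t,y)$ to depend on $h$ alone and to be of the form $e^{zh}$, from which $u(t,y)=e^{zy}f(t)$ and the PDE determines $f(t)=c\,e^{-z^2 t}$. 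With the extreme rays thus characterized, Choquet's theorem delivers the measure $\nu$ in (\ref{eq.repres.Widder}).
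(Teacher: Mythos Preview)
Your sufficiency argument is correct. For necessity you adopt a Choquet strategy, which is also what the paper pursues---not for Theorem \ref{th:widder} itself (which is simply attributed to Widder's 1963 paper), but for the general operator $\mathcal L_y$ in Theorems \ref{th:KoranyiTaylor}--\ref{th:main}, from which Widder's result is recovered afterwards as a special case. However, your compactness step contains a genuine gap.

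The set $\mathcal K$ normalized at the interior point $(1,0)$ is \emph{not} compact in $C^{2,1}_{\mathrm{loc}}((0,\infty)\times\RR)$. For the backward equation $u_t+u_{yy}=0$, the parabolic Harnack inequality controls values at \emph{later} times by values at earlier times, not the reverse: one has $u(t_2,y)\le C\,u(t_1,0)$ only for $t_2>t_1$. Concretely, the normalized exponentials $\tilde k_z(t,y)=e^{z^2}e^{zy-z^2t}$ all lie in $\mathcal K$, yet $\tilde k_z(\tfrac12,0)=e^{z^2/2}\to\infty$ as $z\to\infty$, so $\mathcal K$ is unbounded on any compact meeting $\{t<1\}$ and Choquet's theorem cannot be invoked. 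The paper resolves precisely this difficulty by normalizing at the boundary point $(0,0)$ and replacing the $C_{\mathrm{loc}}$ topology by uniform convergence on the parabolic regions $M_\alpha=\{t\ge\alpha\|y\|^2\}$ (Definition \ref{def.V}); compactness in this topology is Lemma \ref{le:H.compact}, whose proof rescales the equation so that the Harnack constant becomes uniform on each $M_\alpha$.

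A smaller issue concerns your cocycle argument for the converse direction. A pure spatial translate $u(\cdot,\cdot+h)$ cannot be dominated by a constant multiple of $u$ via Harnack alone, since the parabolic Harnack inequality requires a strictly positive time gap between the two points being compared. The argument that does go through (and this is essentially the content of Theorem \ref{th:KoranyiTaylor}) uses a space--time translate: Harnack gives $u(t+s,y+h)\le C(s,h)\,u(t,y)$ for each $s>0$, whence minimality forces $u(t+s,y+h)=\lambda(s,h)\,u(t,y)$; setting $h=0$ yields the separated form $u(t,y)=e^{-\mu t}\psi(y)$, after which the PDE and a further application of minimality force $\psi$ to be a single exponential.
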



As the above theorem shows, the \emph{only} functions that can serve as
initial conditions to (\ref{eq.revHeat}) are given by the bilateral Laplace
transform of the underlying measure $\nu$, namely,
\begin{equation*}
u\left(0,y\right) = \int_{\RR} e^{yz} \nu\left( dz\right),
\end{equation*}
provided the above integral converges for any $y\in \RR$.
We can, now, see that there exists a non-empty space of positive (nonnegative) solutions to equation (\ref{eq.revHeat}), which, of course, is a convex cone. This space is different from the spaces we usually consider when constructing the solutions to a standard elliptic or parabolic linear equation. In particular, as follows from the above representation, one cannot expect the solutions of (\ref{eq.revHeat}) to be vanishing at $y\rightarrow\infty$ and $y\rightarrow-\infty$ simultaneously.
It is also easy to see, by choosing the measure $\nu$ with atoms at the nonnegative integers $\left\{n\right\}$, with the corresponding weights $\left\{1/n!\right\}$, that there exists a solution of (\ref{eq.revHeat}) with the initial condition
\begin{equation*}
u\left(0,y\right) = \int_{\RR} e^{yz} \nu\left( dz\right)=\exp\left(e^y\right)
\end{equation*}
Recall that the above function does not satisfy the necessary growth restriction, and, hence, the standard heat equation
\begin{equation*}
u_t - u_{yy} = 0,\,\,\,\,\,\,\,\,(t,y)\in (0,\infty)\times\RR,
\end{equation*}
equipped with this initial condition, does not possess a solution. Thus, one cannot claim that the space of solutions to (\ref{eq.revHeat}) is ``smaller" than the space of solutions to the standard heat equation. Rather, it is a different space of functions which do not posses some of the properties that we are used to consider natural.

Widder's theorem was used in \cite{HendersonHobson}, \cite{Berrier} and \cite{mz-spde} to describe a class of forward performance processes with zero volatility, which are not necessarily in the factor form proposed herein. Recall that, here, we focus on describing the forward performance processes in a factor form, which may have a nontrivial (i.e. non-zero) volatility.
In particular, the goal of this subsection is to describe the space of solutions to the general time-reversed uniformly parabolic equation (\ref{eq.reversedHeatEq}).
The techniques used by Widder to prove the representation (\ref{eq.repres.Widder}) are based on applying a specific function transform in the space variable and cannot be extended easily to the general case. Therefore, we have to develop a new method for studying equation (\ref{eq.reversedHeatEq}) in full generality.

In fact, the solutions to (\ref{eq.reversedHeatEq}) are called the \emph{space-time harmonic functions} associated with the operator ``$\partial_t + \mathcal{L}_y$". From the probabilistic point of view, these functions characterize the \emph{Martin boundary} of a \emph{space-time diffusion process} $(t,y_t)$, where $\left(y_t\right)$ is the diffusion associated with the generator $\mathcal{L}_y$. For the precise definitions of Martin boundary and its relation to harmonic functions, we refer to \cite{Doob}, \cite{Pinsky}, \cite{RogersWilliams}.
It turns out that one can obtain an explicit integral representation of all space-time harmonic functions using the methods of Potential Theory. These methods allow to describe the Martin boundary of a space-time diffusion via the Martin boundary of the space process itself, which, from an analytical point of view, reduces the ill-posed equation (\ref{eq.reversedHeatEq}) to a well-posed uniformly elliptic equation.
In particular, the results presented below are based on the representation of the \emph{minimal} elements of the cone of nonnegative space-time harmonic functions, obtained by Koranyi and Taylor in \cite{KoranyiTaylor}. The application of Choquet's theory, then, allows us to derive a representation of all solutions to (\ref{eq.reversedHeatEq}) via the minimal solutions, which, in turn, can be computed by solving the associated (well-posed) elliptic equations. This result, in particular, provides a generalization of the Widder's theorem stated above.
However, in order to apply the results of Koranyi and Taylor to the problem at hand, we need to make some additional constructions.


\begin{definition}\label{def.V}
The space $\mathcal{V}$ consists of all functions $v:\left((0,\infty)\times\RR^n\right)\cup\left\{(0,0)\right\}\rightarrow \RR$, continuous on any set $M_{\alpha}:=\left\{\left.(t,y)\in[0,\infty)\times\RR^n\,\right|\,t\geq \alpha \|y\|^2\right\}$, for any $\alpha>0$. The set $\mathcal{V}$ is endowed with the topology of uniform convergence on any compact contained in some $M_{\alpha}$.
\end{definition}


\begin{definition}
The space $\mathcal{H}$ consists of all functions $u\in\mathcal{V}$, such that: $u\in C^{1,2}\left((0,\infty)\times\RR^n\right)$, $u\geq0$, $u(0,0)=1$, and $u$ satisfies (\ref{eq.reversedHeatEq}).
\end{definition}


\begin{definition}
Function $u\in \mathcal{H}$ is a minimal element of $\mathcal{H}$ if, for any $v\in \mathcal{H}$, $v\leq u$ implies $v=\lambda u$, for some $\lambda\in[0,1]$.
\end{definition}

The main result of \cite{KoranyiTaylor} provides an explicit characterization of the minimal elements of $\mathcal{H}$ (i.e. the minimal positive solutions to (\ref{eq.reversedHeatEq})).

\begin{definition}
The set $\mathcal{E}$ consists of all functions $v:\left((0,\infty)\times\RR^n\right)\cup\left\{(0,0)\right\}\rightarrow\RR$ of the form $v(t,y) = e^{-\lambda t} \psi(y)$, with any $\lambda\in\RR$ and any $\psi\in C^2(\RR^n)$, such that $\psi(0)=1$, $\psi\geq0$, and $(\mathcal{L}_y-\lambda)\psi(y)=0$ for all $y\in\RR^n$.
\end{definition}


\begin{theorem}\label{th:KoranyiTaylor}
(Koranyi-Taylor, 1985) The set of all minimal elements of $\mathcal{H}$ coincides with $\mathcal{E}$.
\end{theorem}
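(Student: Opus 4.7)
The plan is to view positive solutions of (\ref{eq.reversedHeatEq}) as positive parabolic-harmonic functions of the space-time diffusion $Z_t := (t, X_t)$, where $X_t$ is the $\mathcal{L}_y$-diffusion (with $c$ treated as a killing rate), and to pass from the space-time Martin boundary to that of the spatial process. Under the uniform ellipticity (\ref{eq.strictEllip}) and the boundedness plus H\"older continuity of the coefficients, $X_t$ admits a smooth transition density with Aronson-type Gaussian bounds and the PDE enjoys a global parabolic Harnack inequality. These ingredients turn $\mathcal{H}$ (normalized by $u(0,0)=1$) into a compact, metrizable base of the convex cone of nonnegative space-time harmonic functions, so Choquet's theorem and the classical Martin boundary framework apply.

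For the inclusion $\mathcal{E} \subset \{\text{minimal elements of } \mathcal{H}\}$, fix $v(t,y) = e^{-\lambda t}\psi(y) \in \mathcal{E}$. The elliptic Harnack inequality for $(\mathcal{L}_y - \lambda)\psi = 0$, together with $\psi(0) = 1$, forces $\psi > 0$ on $\RR^n$, so given $u \in \mathcal{H}$ with $u \leq v$, the ratio $w := u/v$ is well-defined with $0 \leq w \leq 1$. A direct computation (Doob $\psi$-transform of $\partial_t + \mathcal{L}_y - \lambda$) shows that $w$ satisfies a uniformly parabolic time-reversed equation with no zero-order term. Reversing time turns $w$ into a bounded, nonnegative ancient solution of a standard uniformly parabolic equation on $(-\infty,0) \times \RR^n$, and the classical parabolic Liouville theorem (a consequence of parabolic Harnack) forces $w$ to be constant. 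The value $w(0,0) = u(0,0)$ then gives $u = u(0,0)\, v$, establishing minimality of $v$.

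For the reverse inclusion, fix a minimal $u \in \mathcal{H}$ and, for each $\tau > 0$, set $(T_\tau u)(t,y) := u(t+\tau, y)/u(\tau, 0)$, which again lies in $\mathcal{H}$. Chaining the parabolic Harnack inequality on cylinders adapted to $(t,y)$ yields a constant $C(\tau) > 0$, independent of $(t,y)$, such that $u(t+\tau, y) \leq C(\tau)\, u(t, y)$ on $(0,\infty) \times \RR^n$; equivalently, $T_\tau u \leq C(\tau)\, u$ in $\mathcal{H}$. Minimality of $u$ then forces $T_\tau u = c(\tau)\, u$ for some $c(\tau) > 0$, i.e.\ $u(t+\tau, y) = c(\tau)\, u(\tau, 0)\, u(t, y)$. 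Setting $t$ small, and writing $\psi(y) := u(t,y)/u(t,0)$ (which the identity shows is independent of $t > 0$) together with $g(t) := u(t,0)$, substitution into the PDE decouples the variables and yields $g'/g \equiv -\lambda$ and $(\mathcal{L}_y - \lambda)\psi = 0$ for some real $\lambda$. Therefore $u(t,y) = e^{-\lambda t}\psi(y) \in \mathcal{E}$.

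The main obstacle is the global Harnack-type domination $T_\tau u \leq C(\tau)\, u$ on all of $(0,\infty) \times \RR^n$: it is not a single-cylinder statement but requires chaining local parabolic Harnack estimates uniformly in the spatial variable. This uniformity is precisely what the uniform ellipticity (\ref{eq.strictEllip}) and the uniform bounds on $a^{ij}, b^i, c$ provide; without them the space-time minimal boundary can be strictly larger than $\mathcal{E}$, and both the Liouville step in the easy direction and the multiplicative separation in the hard direction can fail.
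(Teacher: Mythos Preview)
The paper does not give a proof; it cites \cite{KoranyiTaylor} and records that the argument rests on the uniform parabolic Harnack inequality. Your second inclusion (minimal $\Rightarrow$ separated form) is exactly that argument: because the Harnack constant depends only on the global ellipticity and coefficient bounds and not on the base point, the shift domination $u(t+\tau,y)\le C(\tau)\,u(t,y)$ holds uniformly in $(t,y)$, and minimality then forces $T_\tau u$ to be a scalar multiple of $u$, which separates the variables. That direction is correct and is the substantive content of \cite{KoranyiTaylor}.

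Your first inclusion, however, contains a genuine error. After the Doob $\psi$-transform, $w=u/v$ does satisfy a uniformly parabolic equation with bounded drift and no zero-order term, but the assertion that ``the classical parabolic Liouville theorem (a consequence of parabolic Harnack) forces $w$ to be constant'' is false for variable-coefficient operators. Take $n=1$, $\mathcal{L}_y=\partial_{yy}$, $\lambda>0$, and $\psi(y)=\cosh(\sqrt{\lambda}\,y)$, so that $v=e^{-\lambda t}\psi\in\mathcal{E}$. The transformed operator is $\partial_t+\partial_{yy}+2\sqrt{\lambda}\tanh(\sqrt{\lambda}\,y)\,\partial_y$, and $w(t,y)=\tfrac12\bigl(1+\tanh(\sqrt{\lambda}\,y)\bigr)$ is a bounded, nonconstant, stationary (hence ancient after time reversal) solution; it corresponds to $u=\tfrac12\,e^{\sqrt{\lambda}\,y-\lambda t}\le v$, which is not a multiple of $v$. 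Parabolic Harnack with variable coefficients gives local oscillation control but not a global Liouville theorem---that is special to translation-invariant equations. The example in fact shows that, in the standard Martin-boundary sense, not every element of $\mathcal{E}$ as defined here is minimal: what \cite{KoranyiTaylor} prove is that the minimal elements are exactly the $e^{-\lambda t}\psi(y)$ with $\psi$ a \emph{minimal} positive solution of $(\mathcal{L}_y-\lambda)\psi=0$. For the downstream Choquet representation only the inclusion $\{\text{minimal}\}\subset\mathcal{E}$ is needed, and that is precisely what your second paragraph establishes.
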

\begin{proof}
The proof is given in \cite{KoranyiTaylor} and it is based on the uniform Harnack's inequality for the solutions of (\ref{eq.reversedHeatEq}). See Appendix A for a relevant version of Harnack's inequality.
\end{proof}


In fact, Koranyi and Taylor show that $\mathcal{E}$ is the set of all minimal elements of a larger space of solutions. Notice that, in the definition of $\mathcal{V}$, we restricted the space of functions to those that are continuous on the parabolic shapes centered at zero. However, it is clear that all elements of $\mathcal{E}$ belong to $\mathcal{H}$, which, combined with the results of \cite{KoranyiTaylor}, yields the statement of the above theorem. The reason that we restrict our analysis to the space $\mathcal{H}$ is that, in order to provide an explicit representation of all elements of $\mathcal{H}$, we need this space to be compact in a topology which makes delta-function a continuous functional. The space proposed by Koranyi and Taylor does not satisfy this property, which is, perhaps, the reason why the aforementioned representation was not established in \cite{KoranyiTaylor}. Notice that $\mathcal{H}$ includes all solutions to (\ref{eq.reversedHeatEq}) which are continuous at $t=0$ and, hence, from an application point of view, our restriction is no loss if generality. 


\begin{lemma}\label{le:H.compact}
The set $\mathcal{H}\subset \mathcal{V}$ is compact.
\end{lemma}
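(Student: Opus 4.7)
The plan is to establish sequential compactness of $\mathcal{H}$ by an Arzel\`a--Ascoli argument on the parabolic regions $M_\alpha$. Given a sequence $(u_n)\subset\mathcal{H}$, I would prove (i) uniform boundedness of $(u_n)$ on every compact $K\subset M_\alpha$, (ii) equicontinuity on each such $K$, strengthened to $C^{1,2}$-equicontinuity on compacts of $(0,\infty)\times\RR^n$, and then (iii) extract a subsequence with the required local uniform convergence and verify that its limit lies in $\mathcal{H}$.

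For the uniform bound, the substitution $\tau=T-t$ turns (\ref{eq.reversedHeatEq}) into the standard uniformly parabolic equation $\partial_\tau u=\mathcal{L}_y u$ on $(0,T)\times\RR^n$, to which the Harnack inequality of Appendix~A applies. Translating back, one obtains: for every $0<t_2<t_1$ and $y_1,y_2\in\RR^n$, there is a constant $C=C(t_1,t_2,y_1,y_2)$, which remains bounded as $(t_2,y_2)\to(0,0)$ along any fixed parabolic region $M_\alpha$, such that $u(t_1,y_1)\leq C\,u(t_2,y_2)$ for every nonnegative solution $u$ of (\ref{eq.reversedHeatEq}). Since each $u_n$ is continuous on $M_\alpha$ with $u_n(0,0)=1$, sending $(t_2,y_2)\to(0,0)$ within $M_\alpha$ gives $u_n(t_1,y_1)\leq C$ with $C$ independent of $n$. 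Covering a compact $K\subset M_\alpha$ by finitely many such target points produces a uniform bound $\sup_K u_n\leq C(K)$.

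Because the coefficients $a^{ij},b^i,c$ are bounded and uniformly H\"older continuous and $A$ is uniformly elliptic, classical interior parabolic Schauder estimates for $\partial_\tau u=\mathcal{L}_y u$ convert the bound of the previous step into uniform $C^{2+\beta,1+\beta/2}$ bounds on every compact subset of $(0,\infty)\times\RR^n$, yielding equicontinuity of $(u_n,\partial_t u_n,D_y u_n,D^2_y u_n)$ on interior compacts. Equicontinuity up to the origin along each $M_\alpha$ needs an additional argument: iterating the Harnack inequality along parabolic cylinders shrinking to $(0,0)$ with a geometry compatible with $M_\alpha$, and using the anchor $u_n(0,0)=1$, forces $u_n(t,y)\to 1$ as $(t,y)\to (0,0)$ within $M_\alpha$ at a rate uniform in $n$. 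A diagonal extraction over a countable exhaustion of each $M_\alpha$ by compacts (and over $\alpha_k=1/k$) then produces a subsequence $u_{n_k}$ converging in the topology of $\mathcal{V}$ to some $u$, with $C^{1,2}$ convergence on interior compacts.

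The limit $u$ is nonnegative by pointwise convergence, satisfies $u(0,0)=1$ by uniform convergence at the origin, and is continuous on each $M_\alpha$ as a uniform limit of continuous functions; the $C^{1,2}$ local convergence permits passage to the limit in (\ref{eq.reversedHeatEq}), so $u\in C^{1,2}((0,\infty)\times\RR^n)$ solves the equation, hence $u\in\mathcal{H}$. The hardest step will be the uniform equicontinuity of $(u_n)$ at the origin along each $M_\alpha$: standard parabolic regularity does not extend to $t=0$ when the only information at the boundary is the single pointwise normalization $u_n(0,0)=1$, and the parabolic shape of $M_\alpha$ is precisely what allows Harnack chains emanating from the origin to stay in the interior of $(0,\infty)\times\RR^n$ with a controlled constant -- which is why Definition~\ref{def.V} privileges regions of the form $\{t\geq\alpha\|y\|^2\}$ rather than truncated cones or half-spaces.
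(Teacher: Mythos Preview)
Your proposal follows the same architecture as the paper's proof: Harnack for uniform bounds on each $M^R_\alpha$, Schauder estimates for equicontinuity, Arzel\`a--Ascoli, then closure under limits. The paper's execution differs from yours in two places. First, for the uniform bound the paper does not argue by sending $(t_2,y_2)\to(0,0)$ in a Harnack inequality; instead it introduces the parabolic rescaling $v^\lambda(t,y)=u(\lambda^2 t,\lambda\sqrt{r}\,y)$, observes that $v^\lambda$ satisfies a uniformly parabolic equation with coefficients bounded independently of $\lambda\in(0,1)$, and applies a single fixed-scale Harnack inequality to $v^\lambda$ (anchored at $v^\lambda(0,0)=u(0,0)=1$) to obtain $u(R\lambda^2,y)\leq C'(r,R)$ on $\{\|y\|^2\leq r\lambda^2\}$ --- which is exactly the uniform bound on $M^R_{R/r}$. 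This rescaling is what makes rigorous your assertion that the Harnack constant ``remains bounded as $(t_2,y_2)\to(0,0)$ along $M_\alpha$'', without having to track how that constant behaves as the comparison point approaches the parabolic boundary. Second, for equicontinuity up to $(0,0)$ the paper does not attempt a Harnack-chain oscillation argument; it invokes a Schauder estimate stated directly on the parabolic domains $D^T_\varepsilon=\{\varepsilon\|y\|^2\leq t\leq T\}$ (citing Knerr), which yields uniform H\"older bounds up to the vertex once the sup bound is in hand. Your iterated-Harnack idea aims at the same conclusion but would need to produce both an upper and a lower bound tending to $1$ --- and Harnack in this time direction only controls values at later $t$ by values at earlier $t$, so the lower bound near the origin is not automatic. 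The paper's route via the scaled Schauder estimate sidesteps this asymmetry.
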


\begin{proof}

This result follows from Harnack's inequality and Schauder estimates (see Appendix A).

It is clear that the topology of $\mathcal{V}$ (and, respectively, of $\mathcal{H}$) is equivalent to the topology of uniform convergence on the sets
$$
M^R_{\alpha}:=M_{\alpha}\cap B_R(0,0),
$$
for all $\alpha,R>0$,
where $B_R(0,0)$ is the ball of radius $R$ in $\RR^{1+n}$, centered at zero.
The Harnack's inequality (see Appendix A) implies that, for any $R>0$, there exists a constant $C(R)$, depending only on $R$, on the upper bounds of the absolute values of the coefficients in $\mathcal{L}_y$, and on the lower and upper bounds of the associated quadratic form, such that any nonnegative solution $u$ of equation (\ref{eq.reversedHeatEq}) satisfies:
\begin{equation*}
u(R,y) \leq C(R) u(0,0)=C(R),\,\,\,\,\,\,\,\,\,\forall \|y\|^2\leq1
\end{equation*}
For any $\lambda\in(0,1)$ and $r>0$, we introduce function $v^{\lambda}(t,y):=u(\lambda^2 t,y\lambda\sqrt{r})$ and notice that it satisfies a strictly parabolic PDE whose coefficients and the associated quadratic form can be bounded by a function of $r$, uniformly over $\lambda\in(0,1)$.
Therefore, there exists a constant $C'(\alpha,R)>0$, such that
\begin{equation*}
u(R\lambda^2,y)=v^{\lambda}\left(R,\frac{y}{\lambda\sqrt{r}}\right) \leq C'(r,R),\,\,\,\,\,\,\,\,\,\forall \|y\|^2\leq r\lambda^2,\,\,\,\forall \lambda\in(0,1)
\end{equation*}
This implies that all elements of $\mathcal{H}$ are bounded uniformly on each $M^R_{\alpha}$, with $\alpha = R/r$.
This conclusion, together with the interior Schauder estimates (see Theorem $1$ in \cite{Schauder}, or Appendix A), yield the relative compactness of $\left\{\left.u\,\right|\,u\in\mathcal{H}\right\}$, $\left\{\left.\mathcal{L}_yu\,\right|\,u\in\mathcal{H}\right\}$, and $\left\{\left.u_t\,\right|\,u\in\mathcal{H}\right\}$ as the subsets of $\mathcal{V}$. Thus, we conclude that any sequence in $\mathcal{H}$ has a convergent subsequence whose limit belongs to $\mathcal{H}$. Since the topology in $\mathcal{V}$ is metrizable, this completes the proof of the lemma.
\end{proof}

Before we can formulate the main theorems, we need to recall some auxiliary results.


\begin{definition}
A function $u\in \mathcal{H}$ is an extreme element of $\mathcal{H}$ if, for any $v_1, v_2 \in \mathcal{H}$, $\frac{1}{2} v_1 + \frac{1}{2} v_2 = u$ implies $v_1=v_2=u$.
\end{definition}


\begin{lemma}
The set of extreme points of $\mathcal{H}$ coincides with the set of its minimal elements $\mathcal{E}$.
\end{lemma}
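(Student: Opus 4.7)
The plan is to establish the classical equivalence between extreme points of the convex base $\mathcal{H}$ and the minimal generators of its cone of nonnegative space-time harmonic functions. I verify the two inclusions separately by elementary convex analysis, relying on the Harnack inequality of Appendix A to handle the edge cases.

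\emph{Minimal $\Rightarrow$ Extreme.} Suppose $u\in\mathcal{H}$ is minimal and $u=\frac{1}{2}(v_1+v_2)$ with $v_1,v_2\in\mathcal{H}$. Since $\frac{1}{2} v_2\geq 0$, the function $\frac{1}{2} v_1$ is a nonnegative solution of (\ref{eq.reversedHeatEq}) dominated by $u$. Minimality gives $\frac{1}{2} v_1=\lambda u$ for some $\lambda\in[0,1]$; evaluating at $(0,0)$ forces $\lambda=\frac{1}{2}$, hence $v_1=u$, and symmetrically $v_2=u$.

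\emph{Extreme $\Rightarrow$ Minimal.} Let $u\in\mathcal{H}$ be extreme and $w$ a nonnegative solution of (\ref{eq.reversedHeatEq}) with $w\leq u$; set $c:=w(0,0)\in[0,1]$. In the generic case $c\in(0,1)$, define $v_1:=w/c$ and $v_2:=(u-w)/(1-c)$; both lie in $\mathcal{H}$ (normalization at $(0,0)$ is immediate, $v_2\geq 0$ since $w\leq u$, and nonnegativity, the PDE, and continuity on each $M_\alpha$ are preserved by rescaling and affine combinations). Since $u=c v_1+(1-c)v_2$, I would first upgrade the $\frac{1}{2}+\frac{1}{2}$ definition of extremity to arbitrary convex combinations by a one-line reduction: if $u=\alpha v_1+(1-\alpha)v_2$ with $\alpha>\frac{1}{2}$, then $w':=(2\alpha-1)v_1+2(1-\alpha)v_2\in\mathcal{H}$ and $u=\frac{1}{2}(v_1+w')$, so extremity forces $v_1=w'=u$, and $v_2=u$ follows from $w'=u$. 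Applied to our decomposition this yields $w=cu$, i.e.\ $w=\lambda u$ with $\lambda=c$.

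The edge cases $c=0$ and $c=1$ (the latter by applying the argument to $u-w$) both reduce to the auxiliary claim that a nonnegative solution of (\ref{eq.reversedHeatEq}) in $\mathcal{V}$ vanishing at $(0,0)$ is identically zero. This is the main obstacle, and I would handle it exactly as in Lemma \ref{le:H.compact}: the uniform Harnack inequality of Appendix A, applied to the unnormalized $w$, gives $w(R,y)\leq C(R)w(0,0)=0$ for $\|y\|^2\leq 1$ and every $R>0$, and the rescaling $(t,y)\mapsto w(\lambda^2 t,\lambda\sqrt{r}\,y)$ extends this to $w=0$ on each $M^R_{R/r}$, whence $w\equiv 0$ on $(0,\infty)\times\RR^n$. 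Plugging back in, $c=0$ gives $w\equiv 0$ and $c=1$ gives $w\equiv u$, and in all cases $w=\lambda u$ with $\lambda=c$, completing the proof.
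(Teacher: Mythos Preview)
The paper's own proof is just a one-line citation to Doob, so you are supplying the standard potential-theoretic argument that the paper omits; your two inclusions are exactly the textbook proof that extreme points of a base coincide with minimal rays of the ambient cone, and the use of the Harnack inequality (borrowed from the proof of Lemma~\ref{le:H.compact}) to dispose of the degenerate cases $c=0,1$ is correct.

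One small remark: in the direction \emph{Minimal $\Rightarrow$ Extreme} you apply minimality to $\tfrac12 v_1$, which is not literally an element of $\mathcal{H}$ since it fails the normalization $u(0,0)=1$. You are therefore using the standard cone-theoretic notion of minimality (any nonnegative solution dominated by $u$ is a scalar multiple of $u$) rather than the paper's stated version with $v\in\mathcal{H}$. This is harmless: the clause ``$\lambda\in[0,1]$'' in the paper's definition would be vacuous under its own normalization (it would force $\lambda=1$), so the standard definition is clearly the intended one---and it is also what is used in the cited reference. No change to your argument is needed, but you may wish to flag this reading explicitly.
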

\begin{proof}
This is a standard result from Potential Theory (cf. page $33$ of \cite{Doob}).
\end{proof}


\begin{lemma}\label{le:E.mbl}
The set $\mathcal{E}\subset \mathcal{V}$ is Borel.
\end{lemma}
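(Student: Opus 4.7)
The plan is to prove the stronger statement that $\mathcal{E}$ is \emph{closed} in $\mathcal{V}$, which immediately gives Borel. Fix a sequence $v_n \in \mathcal{E}$ with $v_n \to v$ in $\mathcal{V}$, and write $v_n(t,y) = e^{-\lambda_n t}\psi_n(y)$ with $\psi_n \in C^2(\RR^n)$, $\psi_n(0) = 1$, $\psi_n \geq 0$, and $(\mathcal{L}_y - \lambda_n)\psi_n = 0$. The task is to produce a pair $(\lambda,\psi)$ matching the definition of $\mathcal{E}$ such that $v(t,y) = e^{-\lambda t}\psi(y)$.

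First I would extract $\lambda_n \to \lambda \in \RR$. The point $(1,0)$ lies in every $M_\alpha$, so the evaluation functional $u \mapsto u(1,0)$ is continuous on $\mathcal{V}$, and hence $v_n(1,0) = e^{-\lambda_n} \to v(1,0)$. The key step is to rule out the degenerate case $v(1,0) = 0$. If it held, then $\lambda_n \to +\infty$, so $v_n(t,0) = e^{-\lambda_n t} \to 0$ for every $t>0$, giving $v(t,0)=0$ on $(0,\infty)$. Since the half-line $\{(t,0): t \geq 0\}$ sits inside every $M_\alpha$, continuity of $v$ on $M_\alpha$ (inherited from $v \in \mathcal{V}$) would force $1 = v(0,0) = \lim_{t\to 0^+} v(t,0) = 0$, a contradiction. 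Hence $\lambda := -\log v(1,0)$ is a finite real number and $\lambda_n \to \lambda$.

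Second, I would set $\psi(y) := e^\lambda v(1,y)$, so that $\psi(0)=1$ and $\psi \geq 0$ automatically. For each $R>0$ the slice $\{(1,y): \|y\| \leq R\}$ is contained in $M_{1/R^2}$, so convergence in $\mathcal{V}$ yields uniform convergence $v_n(1,\cdot) \to v(1,\cdot)$ on $\{\|y\| \leq R\}$; combined with $\lambda_n \to \lambda$, this forces $\psi_n = e^{\lambda_n}v_n(1,\cdot) \to \psi$ uniformly on compacts. Each $\psi_n$ solves the uniformly elliptic equation $(\mathcal{L}_y - \lambda_n)\psi_n = 0$ whose coefficients are uniformly H\"older and uniformly bounded (the shift by $\lambda_n$ stays bounded as $n$ varies). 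Interior Schauder estimates therefore provide uniform $C^{2,\alpha}$ bounds on compact sets, and, after extracting a subsequence, $\psi_n \to \psi$ in $C^{2,\alpha}_{\mathrm{loc}}$; passing to the limit in the PDE gives $\psi \in C^2(\RR^n)$ with $(\mathcal{L}_y - \lambda)\psi = 0$.

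Passing to the pointwise limit in $v_n(t,y) = e^{-\lambda_n t}\psi_n(y)$ then identifies $v(t,y) = e^{-\lambda t}\psi(y)$ on the entire domain (consistent with $v(0,0) = 1 = \psi(0)$), so $v \in \mathcal{E}$ and closedness follows. The main technical obstacle is precisely the dichotomy at $(1,0)$: ruling out $v(1,0) = 0$ is where the parabolic-shape continuity built into Definition~\ref{def.V} of $\mathcal{V}$ pulls its weight, since without that constraint a sequence like $e^{-nt}\psi(y)$ could escape the class $\mathcal{E}$ in the limit.
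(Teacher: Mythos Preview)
Your argument is correct and in fact proves more than the paper asks: you show $\mathcal{E}$ is \emph{closed} in $\mathcal{V}$, not merely Borel. One tiny point worth making explicit: in the contradiction step you invoke $v(0,0)=1$, which is true because $\{(0,0)\}$ is a compact subset of every $M_\alpha$, so evaluation at $(0,0)$ is continuous for the $\mathcal{V}$-topology and $v(0,0)=\lim v_n(0,0)=\lim \psi_n(0)=1$.

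The paper takes an entirely different, one-line route: it appeals to the general fact from convex analysis (Proposition~1.3 in Phelps) that the set of extreme points of a metrizable compact convex set is a $G_\delta$, hence Borel. Since Lemma~\ref{le:H.compact} gives compactness of $\mathcal{H}$ and the preceding lemma identifies $\mathcal{E}$ with the extreme points of $\mathcal{H}$, the citation finishes the job. Your approach trades that abstract machinery for a direct argument that exploits the explicit product form $e^{-\lambda t}\psi(y)$ together with elliptic Schauder estimates; it is longer but self-contained, yields the sharper conclusion that $\mathcal{E}$ is closed, and nicely isolates where the parabolic-shape topology of $\mathcal{V}$ is doing real work (ruling out $\lambda_n\to+\infty$). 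The paper's route, by contrast, is indifferent to the internal structure of $\mathcal{E}$ and would apply verbatim to any compact convex $\mathcal{H}$.
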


\begin{proof}
This is a standard result from Convex Analysis (see Proposition $1.3$ in \cite{Phelps}).
\end{proof}


The following theorem is an immediate corollary of the above results.


\begin{theorem}\label{th.strictEllip.rep.abstract}
Function $u$ belongs to $\mathcal{H}$ (is a nonnegative classical solution to (\ref{eq.reversedHeatEq}), normalized at zero) if and only if there exists a Borel probability measure $\nu$ on $\mathcal{E}$, such that, for any $(t,y)\in \left((0,\infty)\times\RR^n\right)\cup\left\{(0,0)\right\}$, we have
\begin{equation}\label{eq.repres.abstract}
u(t,y) = \int_{\mathcal{E}} v(t,y) \nu(dv)
\end{equation}
Such a measure $\nu$ is uniquely determined by $u\in \mathcal{H}$.
\end{theorem}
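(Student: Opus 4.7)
The plan is to recognize this as a direct application of Choquet's integral representation theorem for compact convex subsets of a locally convex topological vector space. First I would check that $\mathcal{V}$ is metrizable: the topology of uniform convergence on compacts of $\bigcup_\alpha M_\alpha$ is generated by a countable family of seminorms (exhausting $\bigcup_\alpha M_\alpha$ by the compacts $M^{R_k}_{\alpha_k}$ for sequences $R_k\uparrow\infty$, $\alpha_k\downarrow 0$), which makes $\mathcal{V}$ a metrizable locally convex topological vector space. Convexity of $\mathcal{H}$ is immediate from linearity of (\ref{eq.reversedHeatEq}) together with the normalization $u(0,0)=1$, which is preserved under convex combinations; compactness is exactly Lemma \ref{le:H.compact}.

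Next I would invoke the Choquet existence theorem: since $\mathcal{H}$ is a metrizable compact convex subset of $\mathcal{V}$, every $u\in \mathcal{H}$ is the barycenter of a Borel probability measure $\nu$ concentrated on the extreme boundary of $\mathcal{H}$. By the lemma immediately preceding the theorem statement, the extreme points of $\mathcal{H}$ coincide with $\mathcal{E}$, and by Lemma \ref{le:E.mbl} the set $\mathcal{E}$ is Borel in $\mathcal{V}$, so $\nu$ can be viewed as a Borel probability measure on $\mathcal{E}$. To translate the abstract barycenter identity into the pointwise formula (\ref{eq.repres.abstract}), I would observe that any fixed $(t,y)\in \left((0,\infty)\times\RR^n\right)\cup\{(0,0)\}$ lies in some $M_\alpha$ (take any $\alpha\le t/\|y\|^2$ when $y\ne 0$, and any $\alpha>0$ otherwise), so the point-evaluation $\ell_{(t,y)}:f\mapsto f(t,y)$ is continuous and linear on $\mathcal{V}$; applying the barycenter identity to $\ell_{(t,y)}$ yields (\ref{eq.repres.abstract}).

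For uniqueness, the key additional input is that $\mathcal{H}$ must be a \emph{Choquet simplex}, so that Choquet--Meyer applies. I would verify this by showing that the convex cone of nonnegative classical solutions to (\ref{eq.reversedHeatEq}) is a lattice under its natural ordering, which is the classical Riesz-type decomposition property for positive solutions of a uniformly parabolic equation and is the analogue of the lattice structure of positive harmonic functions that underpins Martin boundary theory. This can be obtained by identifying the cone of nonnegative solutions of (\ref{eq.reversedHeatEq}) with the cone of positive space-time harmonic functions of the space-time diffusion $(t,y_t)$ generated by $\partial_t+\mathcal{L}_y$, and using the standard fact that the positive harmonic cone on a suitable Martin compactification is a lattice. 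Once this is in hand, $\mathcal{H}$ is a metrizable compact simplex with extreme boundary $\mathcal{E}$, and the representing measure $\nu$ is unique.

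I expect the main obstacle to be the simplex/uniqueness step rather than existence. Existence is essentially automatic from compactness, convexity, and metrizability, combined with the identification of extreme points with $\mathcal{E}$ already supplied by the preceding lemmas. The lattice property of the cone of positive solutions, by contrast, is a genuinely potential-theoretic fact and is the place where the uniformly parabolic structure of $\partial_t+\mathcal{L}_y$ (needed to guarantee a well-behaved Martin compactification of the associated space-time diffusion) is really used.
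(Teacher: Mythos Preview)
Your approach is essentially the same as the paper's, which dispatches the whole thing in two sentences: necessity by Choquet's theorem (citing Phelps) using the compactness from Lemma~\ref{le:H.compact}, and sufficiency by the standard fact that the barycenter of a probability measure on a compact convex set lies in that set (Proposition~1.1 in Phelps). Your elaboration of metrizability and the translation from the abstract barycenter identity to the pointwise formula via continuity of point evaluations is exactly what one needs to flesh out the paper's terse citation.

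Two remarks. First, you omit the ``if'' direction (that a function given by (\ref{eq.repres.abstract}) actually belongs to $\mathcal{H}$); the paper handles this by the barycenter fact just mentioned, and it is easy, but it should be said. Second, on uniqueness you are in fact more careful than the paper: the paper's proof simply invokes ``Choquet's theorem'' without addressing why the representing measure is unique, whereas you correctly note that one needs $\mathcal{H}$ to be a simplex (equivalently, the cone of nonnegative solutions to be a lattice) and locate this in the Riesz decomposition property for positive parabolic/harmonic functions. That is the genuine content behind the uniqueness claim, and your identification of it as the nontrivial step is accurate.
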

\begin{proof}
In view of Lemma $\ref{le:H.compact}$, the necessity of this statement follows immediately from the Choquet's theorem (cf. page $14$ of \cite{Phelps}), and the sufficiency is a well known result from convex analysis (see Proposition $1.1$ in \cite{Phelps}).
\end{proof}


The above theorem is nothing else but a version of the abstract \emph{Martin representation theorem} (cf. Chapter XII.9 in \cite{Doob}), with the only exception that, here, we are able to describe the topology of $\mathcal{E}$ explicitly. However, the structure of the Borel measures on $\mathcal{E}$ is, still, not very clear, making it difficult to apply the above representation in practice.
Therefore, below, we formulate another result, which is equivalent to Theorem $\ref{th.strictEllip.rep.abstract}$, but is better suited for computations (as demonstrated in Section $\ref{se:examples}$).


\begin{theorem}\label{th:main}
Function $u$ belongs to $\mathcal{H}$ (is a nonnegative classical solution to (\ref{eq.reversedHeatEq}), normalized at zero) if and only if it can be represented, for all $(t,y)\in\left((0,\infty)\times\RR^n\right)\cup\left\{(0,0)\right\}$, as
\begin{equation}\label{eq.represent}
u(t,y) = \int_{\RR} e^{-t\lambda} \psi(\lambda;y) \mu(d\lambda),
\end{equation}
with a Borel probability measure $\mu$ on $\RR$ and a nonnegative function
$\psi: \RR\rightarrow C^2(\RR^n)$, such that $\psi\in L^{1}\left(\RR\rightarrow C(\mathcal{K});\mu\right)$
for any compact $\mathcal{K} \subset \RR^n$ and, for $\mu$-almost every $\lambda$, the following holds: $\psi(\lambda,0)=1$ and $\psi(\lambda;\cdot)$ solves
\begin{equation}\label{eq.elliptic}
\left(\mathcal{L}_y - \lambda\right) \psi(\lambda;y)=0,
\end{equation}
for all $y\in\RR^n$.
Such a pair $(\mu,\psi)$ is determined uniquely by $u\in \mathcal{H}$.

\end{theorem}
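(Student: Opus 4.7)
The plan is to deduce Theorem \ref{th:main} from the abstract Martin representation of Theorem \ref{th.strictEllip.rep.abstract}, by parameterising the cone of minimal elements $\mathcal{E}$ via its exponential decay rate in time and disintegrating the representing measure $\nu$ along this parameterisation.

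First, I would apply Theorem \ref{th.strictEllip.rep.abstract} to obtain a (unique) Borel probability measure $\nu$ on $\mathcal{E}$ with $u(t,y)=\int_{\mathcal{E}}v(t,y)\,\nu(dv)$. By the definition of $\mathcal{E}$, every $v\in\mathcal{E}$ has the unique decomposition $v(t,y)=e^{-\lambda t}\psi_{v}(y)$ with $\psi_{v}(0)=1$ and $(\mathcal{L}_{y}-\lambda)\psi_{v}=0$, so the map $\Lambda:\mathcal{E}\to\RR$, $\Lambda(v):=-\log v(1,0)$, is continuous on $\mathcal{E}\subset\mathcal{V}$ and hence Borel. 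Set $\mu:=\Lambda_{*}\nu$. Since $\mathcal{V}$ is metrizable (see the proof of Lemma \ref{le:H.compact}) and separable, and $\mathcal{E}\subset\mathcal{V}$ is Borel (Lemma \ref{le:E.mbl}), standard disintegration yields a $\mu$-measurable family $\{\nu_{\lambda}\}$ of Borel probability measures with $\nu_{\lambda}(\Lambda^{-1}(\lambda))=1$ for $\mu$-a.e.\ $\lambda$ and $\nu(dv)=\nu_{\lambda}(dv)\,\mu(d\lambda)$.

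The candidate representative is then
$$\psi(\lambda;y):=\int_{\Lambda^{-1}(\lambda)}\psi_{v}(y)\,\nu_{\lambda}(dv),$$
defined for $\mu$-a.e.\ $\lambda$. The normalisation $\psi(\lambda;0)=1$ is immediate from $\psi_{v}(0)=1$. Elliptic Harnack applied to $\mathcal{L}_{y}-\lambda$ bounds every $\psi_{v}$ on compacts in terms of $\psi_{v}(0)=1$, with constants controlled uniformly as $\lambda$ ranges over bounded windows; this delivers the Bochner measurability of $\lambda\mapsto\psi(\lambda;\cdot)\in C(\mathcal{K})$, and the $L^{1}(\mu)$ bound then follows by combining this uniform Harnack control with the fact that $\mu$ is a probability measure and with the continuity of $u$ on the parabolic shapes $M_{\alpha}$. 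By Fubini, the identity $(\mathcal{L}_{y}-\lambda)\psi_{v}=0$ integrates to $(\mathcal{L}_{y}-\lambda)\psi(\lambda;\cdot)=0$ in the distributional sense, and interior Schauder estimates for uniformly elliptic operators with H\"older-continuous coefficients upgrade $\psi(\lambda;\cdot)$ to a classical $C^{2}$ solution of (\ref{eq.elliptic}). A further application of Fubini to the abstract representation then gives
$$u(t,y)=\int_{\mathcal{E}}e^{-\lambda t}\psi_{v}(y)\,\nu(dv)=\int_{\RR}e^{-\lambda t}\psi(\lambda;y)\,\mu(d\lambda),$$
finishing the ``only if'' direction.

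The ``if'' direction is immediate: given $(\mu,\psi)$ with the stated properties, differentiation under the integral sign (justified by the $L^{1}$ hypothesis and Schauder bounds on derivatives of $\psi(\lambda;\cdot)$) shows that the right-hand side of (\ref{eq.represent}) lies in $\mathcal{H}$. For uniqueness, if $(\mu,\psi)$ and $(\mu',\psi')$ both represent $u$, evaluating at $y=0$ gives $\int e^{-\lambda t}\mu(d\lambda)=u(t,0)=\int e^{-\lambda t}\mu'(d\lambda)$ for every $t>0$, so injectivity of the bilateral Laplace transform forces $\mu=\mu'$; for each fixed $y$ the same Laplace uniqueness yields $\psi(\lambda;y)=\psi'(\lambda;y)$ for $\mu$-a.e.\ $\lambda$, and joint continuity of $\psi$ in $y$ together with separability of $\RR^{n}$ upgrades this to $\psi=\psi'$ $\mu$-a.e. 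The main obstacle I expect is the PDE/regularity step: promoting the averaged $\psi(\lambda;\cdot)$ from a measurable weak solution to a genuine classical $C^{2}$ pointwise solution of (\ref{eq.elliptic}), while simultaneously securing its Bochner measurability and $L^{1}(\mu)$ integrability. This requires coupling the disintegration of $\nu$ with a quantitative elliptic Harnack inequality that behaves well in the spectral parameter $\lambda$, whereas the rest of the argument is essentially bookkeeping on top of Theorem \ref{th.strictEllip.rep.abstract}.
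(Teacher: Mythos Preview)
Your proposal is correct and follows essentially the same route as the paper: both start from the abstract representation of Theorem~\ref{th.strictEllip.rep.abstract}, parameterise $\mathcal{E}$ by the exponential rate $\lambda$ via $v\mapsto -\log v(1,0)$, and average over the $\lambda$-fibres to produce $\psi(\lambda;\cdot)$. The paper phrases this averaging as a Bochner-valued conditional expectation $\psi(\zeta)=\EE[\eta\mid\zeta]$ rather than a disintegration, but the content is the same.

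Two points of execution differ and are worth noting. For the ``if'' direction the paper does not differentiate under the integral; instead it first observes (via Theorem~4.3.2 of \cite{Pinsky}) that positive solutions of $(\mathcal{L}_y-\lambda)\psi=0$ exist only for $\lambda\ge\lambda_0$, so $\mathrm{supp}\,\mu$ is bounded below, and then pushes $\mu$ forward to a probability measure on $\mathcal{E}$ and invokes Proposition~1.1 of \cite{Phelps} (the barycentre of a probability measure on a compact convex set lies in that set) to conclude $u\in\mathcal{H}$ directly. This sidesteps the derivative estimates you would otherwise need. For uniqueness the paper dominates both $\mu$ and $\mu'$ by $\mu''=\tfrac12(\mu+\mu')$ and applies the one-sided Bernstein/Widder--Arendt theorem to the densities, then the vector-valued Widder--Arendt theorem of \cite{Chojnacki} to identify $\psi$ with $\psi'$ in $L^1(\RR\to C(\mathcal{K});\mu)$; your Laplace-injectivity plus separability argument achieves the same conclusion. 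Your explicit attention to upgrading $\psi(\lambda;\cdot)$ to a classical $C^2$ solution via Schauder is a point the paper leaves implicit.
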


\begin{remark}
The main contribution of Theorem $\ref{th:main}$ is that it reduces the (ill-posed) forward parabolic equation (\ref{eq.reversedHeatEq}), which cannot be analyzed by means of standard theory, to a regular elliptic equation (\ref{eq.elliptic}), which can be solved using the existing methods. 
In particular, if $n=1$, all positive solutions to the one-dimensional version of (\ref{eq.elliptic}) can be described through the two (increasing and decreasing) fundamental solutions, which, in turn, can be approximated efficiently, for example, by a series expansion (cf. \cite{Titchmarsh}).
Some existence results for an arbitrary dimension $n$ are also presented in Appendix A. 
\end{remark}

\begin{proof}
Let's prove the necessity first. We need to derive the representation (\ref{eq.represent}) from (\ref{eq.repres.abstract}).
Consider $\mathcal{E}$ as a random space, with the Borel sigma-algebra (the topology is induced by $\mathcal{V}$) and a probability measure $\nu$ on it.
Recall that each $v\in \mathcal{E}$ has a unique decomposition: $v(t,y)=e^{-\lambda t} \psi(y)$.
Then, we fix an arbitrary $\varepsilon\in(0,1)$ and a compact $\mathcal{K}\subset \RR^n$, and introduce the following random elements:
$$
\xi: \mathcal{E} \rightarrow C\left([\varepsilon,1/\varepsilon]\right),
\,\,\,\,\,\,\,v\mapsto v(\cdot,0),
$$
$$
\eta: \mathcal{E} \rightarrow C\left(\mathcal{K}\right),
\,\,\,\,\,\,\,v\mapsto v(0,\cdot),
$$
$$
\zeta: \mathcal{E} \rightarrow \RR,
\,\,\,\,\,\,\,v \mapsto \log\left(\xi(v)(1)\right),
$$
where the `$C$' spaces are endowed with uniform norms, making them into Banach spaces.
The above mappings are continuous and, hence, measurable. In addition, a simple application of Harnack's inequality (see, for example, the proof of Lemma $\ref{le:H.compact}$) shows that the respective norms of $\xi(v)$, $\eta(v)$, and $\zeta(v)$ are bounded over all $v\in\mathcal{E}$.
Next, notice that, for any $(t,y)\in [\varepsilon,1/\varepsilon]\times\mathcal{K}$, we have
$$
\int_{\mathcal{E}} v(t,y)\, \nu(dv) = \left[\int_{\mathcal{E}} \xi(v)\eta(v)\, \nu(dv)\right](t,y) = \left[\EE (\xi\eta)\right](t,y)
=\left[\EE\left(\xi\,\EE\left[\left. \eta\,\right|\, \zeta \right]\right)\right](t,y),
$$
where the second integral is understood in the Bochner sense (see Appendix A for details), and, to obtain the last equality, we noticed that the value of $\xi(v)$ is uniquely determined by the value of $\zeta(v)$.
The argument $(t,y)$ can be put in and out of the second integral in the above, due to the fact that delta-function is a continuous functional with respect to the uniform topology and due to the properties of Bochner integral (see the Hille's theorem in Appendix A or in \cite{SwartzFuncAn}).
Next, recall the basic property of conditional expectation, which states that there exists
$\psi\in L^{1}\left(\RR\rightarrow C(\mathcal{K});\mu\right)$, with $\mu$ being the distribution of $\zeta: \mathcal{E} \rightarrow \RR$,
such that $\EE\left[\left. \eta\right| \zeta \right] = \psi(\zeta)$.
Therefore, we have
$$
\int_{\mathcal{E}} v(t,y) \nu(dv) = \left[\EE\left(\xi\, \psi(\zeta) \right)\right](t,y)
=  \left[\int_{\mathcal{E}} \xi(v) \psi(\zeta(v)) \nu(dv)\right](t,y)
$$
$$
= \int_{\mathcal{E}} e^{-t\zeta(v)} \psi(\zeta(v);y) \nu(dv)
= \int_{\RR} e^{-t\lambda} \psi(\lambda;y) \mu(d\lambda)
$$
The integral in the right hand side of the above ia absolutely convergent, as such is the integral in the left hand side.
Thus, we obtain the desired representation (\ref{eq.represent}).

To prove that function $u$ defined by (\ref{eq.represent}) belongs to $\mathcal{H}$, we, first, recall the well known fact (see, for example, Theorem 4.3.2 in \cite{Pinsky}) that there exists $\lambda_0\in\RR$, such that for any $\lambda<\lambda_0$ the only nonnegative solution to (\ref{eq.elliptic}) is zero. Thus, the support of $\mu$ is bounded from below, and, hence, the integral in (\ref{eq.represent}) is well defined.
Next, we notice that the mapping
$$
\RR\ni \lambda \mapsto \left((t,y) \mapsto e^{-t\lambda} \psi(\lambda;y)\right) \in \mathcal{E}
$$
is measurable and, hence, we can use a change of variables to deduce
$$
u(t,y)=\int_{\RR} e^{-t\lambda} \psi(\lambda;y) \mu(d\lambda)=\int_{\mathcal{E}} v(t,y) \nu(dv),
$$
for some probability measure $\nu$ on $\mathcal{E}$ and any $(t,y)\in (0,\infty)\times\RR^n$.
We now apply the standard result from convex analysis (cf. Proposition 1.1 in \cite{Phelps}), which states that an integral with respect to a probability measure over a compact convex set in a locally convex space \emph{represents} a point in this set (in the sense that the value of any continuous linear functional applied to this point coincides with the integral of the values of this functional applied to the integrand). In the present case, it means that $u\in \mathcal{H}$.

Let's prove the uniqueness of such representation. Assume there exists another pair $(\mu',\psi')$ such that
$$
u(t,y) = \int_{\RR} e^{-t\lambda} \psi'(\lambda;y) \mu'(d\lambda).
$$
Consider $\mu''=\frac{1}{2}(\mu+\mu')$. It is a probability measure, and we have: $\mu\prec \mu''$ and $\mu'\prec \mu''$. Denote the densities of $\mu$ and $\mu'$, with respect to $\mu''$, by $p$ and $p'$ respectively. Notice that, for $\mu''$-almost every $\lambda$, we have $\psi(\lambda;0)=\psi'(\lambda;0)=1$. Thus, we obtain
$$
u(t,0)=\int_{\RR} e^{-t\lambda} p(\lambda) \mu''(d\lambda)
= \int_{\RR} e^{-t\lambda} p'(\lambda) \mu''(d\lambda)
$$
for all $t\geq0$.
Recall that the supports of $\mu$ and $\mu'$ have to lie in $[\lambda_0,\infty)$, for some $\lambda_0\in\RR$. Therefore, we obtain
$$
\int_{\lambda_0}^{\infty} e^{-t\lambda} p(\lambda) \mu''(d\lambda)
= \int_{\lambda_0}^{\infty} e^{-t\lambda} p'(\lambda) \mu''(d\lambda)
$$
From the uniqueness of the integral representation in the Bernstein (or, Widder-Arendt) theorem (cf. Theorem II.6.3 in \cite{WidderLaplace}), we conclude that $p\equiv p'$, and, hence, $\mu\equiv\mu'$.
As a result, we have
$$
\int_{\lambda_0} e^{-t\lambda} \psi(\lambda;y) \mu(d\lambda)
= \int_{\lambda_0} e^{-t\lambda} \psi'(\lambda;y) \mu(d\lambda).
$$
Finally, we apply the generalized Widder-Arendt theorem (see Theorem 1.2 in \cite{Chojnacki}), to conclude that $\psi$ and $\psi'$ coincide, as elements of $L^{1}\left(\RR\rightarrow C(\mathcal{K});\mu\right)$.
\end{proof}


We finish this subsection by recovering the Widder's representation (\ref{eq.repres.Widder}) from Theorem $\ref{th:main}$.
Recall that, if $\mathcal{L}_y=\Delta$ and $n=1$, any solution to (\ref{eq.elliptic}) is a linear combination of the following fundamental solutions
\begin{equation*}
\psi^{1}(y,\lambda )=e^{y\sqrt{\lambda }}\,\,\,\,\,\,\text{and}
\,\,\,\,\,\,\psi^{2}(y,\lambda) = e^{-y\sqrt{\lambda }},
\end{equation*}
for all $\lambda\geq0$. And there are no positive solutions to (\ref{eq.elliptic}) if $\lambda<0$.
Thus, according to Theorem $\ref{th:main}$, all nonnegative solutions to (\ref{eq.reversedHeatEq}) are given by
\begin{equation*}
u(t,y) = \int_0^{\infty}e^{-\lambda t}\left(c_1(\lambda) e^{-y\sqrt{\lambda }} + c_2(\lambda) e^{y\sqrt{\lambda }}\right)\nu(d\lambda),
\end{equation*}
where $\nu$ is a Borel measure, and $c_i$'s are measurable nonnegative functions, such that the above integral converges everywhere.
Changing variables in the above, we obtain the Widder's representation:
\begin{eqnarray*}
&&u(t,y) = \int_{\RR} e^{yz-z^2 t} \left(\nu_1(dz)+ \nu_2(dz)\right),
\end{eqnarray*}
where
\begin{equation*}
\nu_{1}(dz) = \bone_{(-\infty,0]}(z) c_1(z^2)\left(\nu\circ m^{-1}_1\right)(dz)\,\,\,\,\,\,\,\text{and}\,\,\,\,\,\,\nu_{2}(dz) = \bone_{[0,\infty)}(z) c_2(z^2)\left(\nu\circ m^{-1}_2\right)(dz),
\end{equation*}
with $m_1:\lambda\mapsto-\sqrt{\lambda}$ and $m_2:\lambda\mapsto\sqrt{\lambda}$.

\begin{remark}
It is worth discussing the connection between the representation (\ref{eq.represent}) and the \emph{turnpike} theorems, developed, for example, in \cite{Mossin}, \cite{CoxHuangTurnpike}, \cite{DetempleTurnpike}, \cite{GuasoniRobertson}. These papers consider solutions to a sequence of optimal investment problems, with the same utility function and the time horizons going to infinity. Assuming that the optimal wealth processes, for all the optimization problems, are bounded from below by a deterministic process exploding at infinity, and, in addition, that the utility function behaves like a power function, asymptotically, for large wealth arguments, the turnpike theorems yield
\begin{equation*}
u(t,y) \sim e^{-\lambda t} \psi(\lambda;y),
\end{equation*}
as the time horizon $t$ grows to infinity. Function $u$, in this case, is understood as the inverse of the marginal value function of a finite time horizon problem. Notice that our results are in perfect accordance with the turnpike theorems: Theorem $\ref{th:main}$ implies that, as the time horizon goes to infinity, the asymptotic relation of the turnpike theorems holds for a sequence of problems with state- and time-dependent utility functions, which have power dependence on the wealth argument. However, unlike the turnpike theorems, here, we consider only time-consistent sequences of optimization problems, which have a common solution for all time horizons, and we obtain an exact, rather than asymptotic, relation.
\end{remark}


\subsection{Degenerate case}

Notice that not all equations arising in the portfolio optimization theory are of the form (\ref{eq.reversedHeatEq}).
In fact, as it was demonstrated in Subsection $\ref{se:linHJB.complete}$, in complete diffusion-based markets, the application of duality methods typically leads to the following equation:
\begin{equation}\label{eq.reversedHeatEq.mod}
u_t + \mathcal{L}_{yz} u = 0,\,\,\,\,\,(t,y,z)\in (0,\infty)\times\RR^{n+1},
\end{equation}
where
\begin{equation*}
\mathcal{L}_{yz}= \sum_{i,j=1}^n a^{ij}(y) \partial^2_{y^i y^j}
+ \sum_{i=1}^n q^i(y) \partial^2_{zy^i} + p(y) \partial^2_{zz}
+ \sum_{i=1}^n b^i(y) \partial_{y^i} + r(y) \partial_z + c(y),
\end{equation*}
with continuous functions $\left\{a^{ij}\right\}$, $p$, $\left\{q^i\right\}$, $\left\{b^i\right\}$, $r$, and $c$, defined through the parameters of the stochastic model:
\begin{eqnarray*}
&&\left(a^{ij}(y)\right) = \sigma^T(y) \sigma(y),\,\,\,\,\,\,\,\,
q(y) = \sigma^T(y) \lambda(y),\,\,\,\,\,\,\,\,\,\,\,\,
p(y) = \lambda^T(y) \lambda(y),\\
&&b(y) = \mu(y) - \sigma^T(y) \lambda(y),\,\,\,\,\,\,\,\,\,
r(y) = \frac{1}{2}\lambda^T(y)\lambda(y),\,\,\,\,\,\,\,\,\,
c(y) = 0.
\end{eqnarray*}
One can see that the quadratic form of $x\in\RR^{n+1}$, associated with $\mathcal{L}_{yz}$,
$$
\sum_{i,j=1}^n a^{ij}(y) x^i x^j + \sum_{i=1}^n q^i(y) x^i x^{n+1} + p(y) (x^{n+1})^2,
$$
is degenerate in, at least, one direction, at each point $y\in\RR^{n}$, implying that $\mathcal{L}_{yz}$ is \emph{not} uniformly elliptic (but rather \emph{degenerate elliptic}), as an operator acting on functions on $\RR^{n+1}$.
As a consequence, many of the techniques used in the previous subsection (in particular, the uniform Harnack's inequality), cannot be applied to equation (\ref{eq.reversedHeatEq.mod}).
To illustrate the differences, we follow the ideas of previous subsection and introduce the space $\tilde{\mathcal{E}}$.


\begin{definition}
The set $\tilde{\mathcal{E}}$ consists of all functions $v:\left((0,\infty)\times\RR^{n+1}\right)\cup\left\{(0,0,0)\right\}\rightarrow\RR$ of the form $v(t,y,z)= e^{-\lambda t} \psi(y,z)$, with any $\lambda\in\RR$ and any $\psi\in C^2(\RR^{n+1})$, such that $\psi(0,0)=1$, $\psi\geq0$, and $(\mathcal{L}_{yz}-\lambda)\psi(y,z)=0$ for all $(y,z)\in\RR^{n+1}$.
\end{definition}


We endow $\tilde{\mathcal{E}}$ with the topology of uniform convergence on any compact contained in
\begin{equation}
\tilde{M}_{\alpha}:=\left\{\left.(t,y,z)\in[0,\infty)\times\RR^{n+1}\,\right|\,t\geq \alpha \left(\|y\|^2+z^2\right)\right\},
\end{equation}
for any $\alpha>0$.
It is, then, natural to suggest that all nonnegative solutions to (\ref{eq.reversedHeatEq.mod}), normalized at zero, are given by
\begin{equation}\label{eq.repres.abstract.deg}
u(t,y,z) = \int_{\tilde{\mathcal{E}}} v(t,y,z) \nu(dv)
\end{equation}
for all $(t,y,z)\in \left((0,\infty)\times\RR^{n+1}\right)\cup\left\{(0,0,0)\right\}$,
where $\nu$ is a Borel probability measure on $\tilde{\mathcal{E}}$.
However, it turns out that the above representation is not complete!


Let us construct an example of equation of the type (\ref{eq.reversedHeatEq.mod}), which possesses a solution that cannot be represented in the form (\ref{eq.repres.abstract.deg}).
Consider the simplest case when our model reduces to the one-dimensional Black-Scholes-Merton model, with
\begin{eqnarray*}
&&n=1;\,\,\,\,\sigma(y) = \sigma \in (0,\infty);\,\,\,\,\,\,\,\,
\mu(y) = \tilde{\mu} - \sigma^2/2,\,\,\,\text{with}\,\,\, \tilde{\mu}\in \RR;\,\,\,\,\,\,
\lambda(y) = \frac{\tilde{\mu}}{\sigma}\in\RR
\end{eqnarray*}
The equation (\ref{eq.reversedHeatEq.mod}), then, reduces to
\begin{equation}\label{eq.reversedHeatEq.mod.BS}
u_t + \frac{\sigma^2}{2}\left(u_{yy} - 2\frac{\lambda}{\sigma}u_{zy} + \frac{\lambda^2}{\sigma^2}u_{zz}\right) + \frac{\lambda^2}{2}u_z - \frac{\sigma^2}{2}u_y = 0,\,\,\,\,\,(t,y,z)\in (0,\infty)\times\RR^{2}
\end{equation}
Assuming $\tilde{\mu}\neq\sigma^2$ and $\tilde{\mu}\neq0$, we choose a smooth function $\varphi:\RR\rightarrow[0,\infty)$, with compact support, taking value one at zero, and consider
$$
u(t,y,z) = \varphi\left(\frac{\lambda}{2}(\lambda-\sigma)t - \frac{\lambda}{\sigma}y - z\right),
$$
for all $(t,y,z)\in [0,\infty)\times\RR^{2}$. It is easy to check that the above function $u$ satisfies (\ref{eq.reversedHeatEq.mod.BS}).
Let us show that it cannot be represented via (\ref{eq.repres.abstract.deg}). Assume the opposite. Since $\frac{\lambda}{2}(\lambda-\sigma)\neq 0$, there exist $(y,z)\in\RR^2$ and $t>0$, such that $u(t,y,z)=0$ and $u(0,y,z)>0$. Consider
$$
0=u(t,y,z)=\int_{\tilde{\mathcal{E}}} v(t,y,z) \nu(dv).
$$
Since all elements of $\tilde{\mathcal{E}}$ are nonnegative, we conclude that $v(t,y,z)=0$ for $\nu$-almost every $v\in\tilde{\mathcal{E}}$. Next, from the definition of $\tilde{\mathcal{E}}$, we conclude that $v(0,y,z)=0$ for $\nu$-almost every $v\in\tilde{\mathcal{E}}$, and, therefore, $u(0,y,z)=0$. Thus, we obtain the desired contradiction.


The difficulties associated with equation (\ref{eq.reversedHeatEq.mod}) stem from the fact that the operator $\mathcal{L}_{yz}$ is degenerate. The above example shows that this operator may not even be \emph{hypoelliptic}. As a result, the a priori estimates of the solutions to (\ref{eq.reversedHeatEq.mod}), and their derivatives (such as the Schauder estimates and Harnack's inequality), are not readily available. These estimates are crucial for the proofs of Theorems $\ref{th:KoranyiTaylor}$, $\ref{th.strictEllip.rep.abstract}$, and $\ref{th:main}$.
One can, of course, try to restrict the setting by imposing additional conditions on the coefficients of the model, which, although not natural from a financial point of view, may ensure that the operator $\mathcal{L}_{yz}$ satisfies the \emph{H\"ormander condition}, in the sense that the Lie algebra generated by the vector fields from \emph{both the first and the second order differentials} has full rank. The H\"ormander condition yields hypoellipticity of $\mathcal{L}_{yz}$.
See \cite{Kolmogorov}, \cite{Weber}, \cite{Illin}, and \cite{Hormander} for the definitions, existence results, and the construction of fundamental solutions for the equations of H\"ormander type. However, the following example shows that the H\"ormander condition, and, consequently, the hypoellipticity of $\mathcal{L}_{yz}$, is not sufficient for the representation (\ref{eq.repres.abstract.deg}) to be complete.

Consider the following version of (\ref{eq.reversedHeatEq.mod}):
$$
u_t + u_{yy} + y u_z = 0
$$
This is a standard example of a parabolic equation satisfying the H\"ormander condition. In fact, its hypoellipticity was shown in \cite{Kolmogorov}. Notice that the function 
$$
u(t,y,z) = \exp\left(3z - 3ty - 3 t^2\right)
$$
satisfies the above equation. Assume that it can be represented via (\ref{eq.repres.abstract.deg}). Then, using the disintegration, $\mu(d\lambda,d\theta) = \nu(d\lambda,\theta)\rho(d\theta)$, we obtain
$$
e^{3z}=u(0,0,z) = \int_{\RR} e^{\theta z} \nu(\RR,\theta) \rho(d\theta)
$$
From the above, we conclude that $\rho(d\theta)=\delta_3(d\theta)$ and that $\nu(d\lambda,\theta)=\nu(d\lambda)$ is a probability measure on $\RR$. Therefore,
$$
e^{-3t^3} = \int_{\RR} e^{\lambda t} \nu(d\lambda)
$$
is a moment generating function of a probability distribution.
However, Theorem 7.3.5 of \cite{Lukacs} implies that this is impossible.

In fact, it is not surprising that the H\"ormander condition does not resolve our problem: this condition is not sufficient to establish the required a priori estimates, such as the Harnack's inequality, for solutions to (\ref{eq.reversedHeatEq.mod}). For example, the existing forms of Harnack's inequality, available in the literature, require a stronger version of H\"ormander condition, which never holds for the equations of the form (\ref{eq.reversedHeatEq.mod}) (cf. \cite{Kupcov}, \cite{Garofalo} and \cite{Kogoj}).


We have seen that (\ref{eq.repres.abstract.deg}) fails to describe all nonnegative solutions to (\ref{eq.reversedHeatEq.mod}), under the standard assumptions on the model coefficients. Therefore, one can only expect the 'if' part of Theorem (\ref{th.strictEllip.rep.abstract}) to hold true. Such statement would allow us to describe a large (albeit incomplete) class of nonnegative solutions to (\ref{eq.reversedHeatEq.mod}). However, in order to use this result, one would need to know how to construct the elements of $\tilde{\mathcal{E}}$.
The latter may result in a complicated problem on its own, as the associated equation
\begin{equation}\label{eq.temp.1}
(\mathcal{L}_{yz}-\lambda)\psi(y,z)=0
\end{equation}
is degenerate, and it is not immediately clear whether it has a solution and how to compute it. In some particular cases, a change of variables in the above PDE may eliminate the second order derivatives involving $z$ and make the equation similar to (\ref{eq.reversedHeatEq}), with $z$ playing the role of $t$. However, very often, such reduction is not possible, and, even when it is possible, the coefficient in front of $u_z$ may be degenerate, so that we cannot apply Theorems $\ref{th.strictEllip.rep.abstract}$ and $\ref{th:main}$ to characterize the nonnegative solutions of (\ref{eq.temp.1}).
In view of the above discussion, here, we only describe a class of nonnegative solutions to (\ref{eq.reversedHeatEq.mod}), which can be constrcuted by solving a family of uniformly elliptic PDEs (the same level of complexity as the one required to apply Theorem $\ref{th:main}$).


\begin{theorem}\label{th:main.2}
Consider a function $u$, given by
\begin{equation}\label{eq.represent.ext}
u(t,y,z) = \int_{\RR^2} e^{-t\lambda - z\theta} \psi(\lambda,\theta;y) \mu(d\lambda,d\theta),
\end{equation}
for all $(t,y,z)\in\left((0,\infty)\times\RR^{n+1}\right)\cup\left\{(0,0,0)\right\}$,
with a Borel probability measure $\mu$ on $\RR^2$ and a nonnegative function $\psi: \RR^2\rightarrow C^2(\RR^n)$, such that
$\psi\in L^{1}\left(\RR^2\rightarrow C^2(\mathcal{K});\mu\right)$,
for any compact $\mathcal{K} \subset \RR^{n}$ and, for $\mu$-almost every $(\lambda,\theta)$, the following: $\psi(\lambda,\theta;0)=1$ and $\psi(\lambda,\theta;\cdot)$ solves
\begin{equation}\label{eq.elliptic.ext}
\left(\mathcal{L}_y - \theta \sum_{i=1}^n q^i(y) \partial_{y^i} + \theta^2 p(y) - \theta r(y) - \lambda\right) \psi(\lambda,\theta;y)=0,
\end{equation}
for all $y\in\RR^n$.
Then, the function $u$ is a nonnegative classical solution to (\ref{eq.reversedHeatEq.mod}) satisfying $u(0,0,0)=1$.
\end{theorem}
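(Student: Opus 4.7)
The strategy is to recognize that the integrand in (\ref{eq.represent.ext}) is, for each fixed $(\lambda,\theta)$, a separable ``building-block'' solution of (\ref{eq.reversedHeatEq.mod}), and then to transfer the PDE from the integrand to the integral by differentiation under the integral sign. First I would fix $(\lambda,\theta)$ in the support of $\mu$ and set $v_{\lambda,\theta}(t,y,z) := e^{-t\lambda-z\theta}\,\psi(\lambda,\theta;y)$. A direct computation yields $\partial_t v_{\lambda,\theta} = -\lambda v_{\lambda,\theta}$, $\partial_z v_{\lambda,\theta}=-\theta v_{\lambda,\theta}$, $\partial^2_{zz} v_{\lambda,\theta}=\theta^2 v_{\lambda,\theta}$, $\partial^2_{zy^i} v_{\lambda,\theta} = -\theta\, e^{-t\lambda-z\theta}\partial_{y^i}\psi$, while $y$-derivatives act only on $\psi$. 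Substituting these into the definition of $\mathcal{L}_{yz}$ and collecting terms shows
\begin{equation*}
(\partial_t + \mathcal{L}_{yz}) v_{\lambda,\theta}(t,y,z) = e^{-t\lambda-z\theta}\Big[\mathcal{L}_y - \theta\sum_{i=1}^n q^i(y)\partial_{y^i} + \theta^2 p(y) - \theta r(y) - \lambda\Big]\psi(\lambda,\theta;y),
\end{equation*}
which vanishes for $\mu$-almost every $(\lambda,\theta)$ by the elliptic identity (\ref{eq.elliptic.ext}). Thus each $v_{\lambda,\theta}$ is a nonnegative classical solution of (\ref{eq.reversedHeatEq.mod}) with $v_{\lambda,\theta}(0,0,0)=1$.

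Next I would justify the exchange of $\int_{\RR^2}$ with $\partial_t$, $\partial_{y^i}$, $\partial^2_{y^iy^j}$, $\partial_z$, $\partial^2_{zz}$, $\partial^2_{zy^i}$. For a relatively compact set $Q\Subset(0,\infty)\times\RR^{n+1}$ and any compact $\mathcal{K}\subset\RR^n$ containing the $y$-projection of $Q$, the derivatives of $v_{\lambda,\theta}$ up to the orders required by $\partial_t + \mathcal{L}_{yz}$ are bounded pointwise on $Q$ by a polynomial in $(\lambda,\theta)$ times $e^{-t\lambda-z\theta}\|\psi(\lambda,\theta;\cdot)\|_{C^2(\mathcal{K})}$. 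Choosing $(t_0,z_0)$ and $(t_1,z_1)$ slightly outside $Q$ on either side of each coordinate, the elementary bound $|\lambda|^k|\theta|^\ell e^{-t\lambda-z\theta} \leq C_Q\big(e^{-t_0\lambda-z_0\theta}+e^{-t_1\lambda-z_1\theta}+\cdots\big)$ on $Q$ converts the polynomial factors into additional exponential evaluations. The hypothesis that $u$ is \emph{given by} (\ref{eq.represent.ext}) (hence that the integral converges for every $(t,y,z)$), together with the assumption $\psi\in L^1(\RR^2\to C^2(\mathcal{K});\mu)$, then provides an $L^1(\mu)$ majorant on $Q$. Dominated convergence therefore gives $u\in C^{1,2}((0,\infty)\times\RR^{n+1})$ and
\begin{equation*}
(\partial_t + \mathcal{L}_{yz})u(t,y,z) = \int_{\RR^2}(\partial_t + \mathcal{L}_{yz})v_{\lambda,\theta}(t,y,z)\,\mu(d\lambda,d\theta)=0.
\end{equation*}

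Finally, nonnegativity of $u$ is immediate from $\psi\geq0$ and the positivity of the exponential, while
\begin{equation*}
u(0,0,0) = \int_{\RR^2}\psi(\lambda,\theta;0)\,\mu(d\lambda,d\theta) = \int_{\RR^2}1\,\mu(d\lambda,d\theta)=1,
\end{equation*}
using $\psi(\lambda,\theta;0)=1$ $\mu$-a.s.\ and that $\mu$ is a probability measure. The only technical hurdle is the second step: the kernel $e^{-t\lambda-z\theta}$ is unbounded in $(\lambda,\theta)$, so the $L^1$ integrability of $\psi$ does not by itself dominate the derivatives. The implicit finiteness of (\ref{eq.represent.ext}) at every $(t,y,z)$ is what allows the polynomial prefactors coming from $\partial_t,\partial_z$ to be absorbed, via the convexity of $(t,z)\mapsto e^{-t\lambda-z\theta}$, into finitely many neighboring evaluations of the same integral. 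No appeal to Harnack's inequality, hypoellipticity, or compactness arguments is required, which is precisely what distinguishes this (sufficient-direction only) statement from Theorem~\ref{th:main}.
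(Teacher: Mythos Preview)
Your approach is essentially the same as the paper's: verify that each separable kernel $v_{\lambda,\theta}(t,y,z)=e^{-t\lambda-z\theta}\psi(\lambda,\theta;y)$ solves (\ref{eq.reversedHeatEq.mod}) via (\ref{eq.elliptic.ext}), and then pass the operator $\partial_t+\mathcal{L}_{yz}$ through the integral. The paper compresses your dominated-convergence justification into the phrase ``a trivial application of Hille's and Fubini's theorems,'' using the Bochner-integral viewpoint (Hille) for the $y$-derivatives and Fubini for the $(t,z)$-derivatives; the substance is the same, and your write-up is in fact more explicit than the paper's.
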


\begin{proof}
The proof is a trivial application of the Hille's (cf. Appendix A or \cite{SwartzFuncAn}) and Fubini's theorems.
\end{proof}


\section{Examples}\label{se:examples}

\subsection{Mean-reverting log-price}\label{subse:ex.meanRev}


Consider a model for the financial market, which consists of only one risky asset $S$ (i.e. $n=k=1$), driven by a one-dimensional Brownian motion $W$ (i.e. $d=1$), via
\begin{equation*}
dS_t = \left(a+\frac{1}{2}\sigma^2 - b \log S_t\right) S_t dt + \sigma S_t dW_t,
\end{equation*}
where $a>0$ and $b>0$ are constants, and, as usual, we assume that the interest rate is zero.
It is easy to see that, in fact, $S$ is the exponential of an Ornstein-Uhlenbeck process.
In particular, we obtain that $Y_t = \log S_t$ satisfies
\begin{equation*}
dY_t = \left(a - b Y_t\right) dt + \sigma dW_t
\end{equation*}
The above model was proposed in \cite{Schwartz} to model the prices of commodities.
Notice that this market model is complete, and, hence, we are in the setting of Subsection $\ref{se:linHJB.complete}$.
Let us describe a family of functions $V:\RR_+\times\RR\times(0,\infty)\rightarrow \RR$, such that $V(t,Y_t,x)$ is a forward performance process. Introducing $u(t,y,z)$, to denote $\left(V_x(t,y,.)\right)^{-1}(\exp(z))$,
we recall that function $u$ is expected to satisfy equation (\ref{eq.illPosedHJB.simplest}), which, in the present setting, becomes
\begin{eqnarray}
&&u_{t} + \frac{1}{2}\left[\frac{1}{\sigma^2}\left(a + \frac{1}{2}\sigma^2 - by\right)^2 u_{zz} -  2 \left(a+\frac{1}{2}\sigma^2-by\right) u_{yz}
+ \sigma^2 u_{yy} \right]\nonumber \\
&&\label{eq.ex1.temp1}\phantom{???????????????????????}+ \frac{ \left(a + \frac{1}{2}\sigma^2 - by\right)^2}{2\sigma^2} u_{z} - \frac{\sigma^2}{2} u_y = 0
\end{eqnarray}
Applying Theorem $\ref{th:main.2}$, we reduce the problem to solving equation (\ref{eq.elliptic.ext}), which, in the present case, becomes
\begin{equation*}
\sigma^2 \psi_{yy} + \left(2\theta \left(a+\frac{1}{2}\sigma^2-by\right)-\sigma^2\right) \psi_y + \left(\theta(\theta-1) \frac{ \left(a + \frac{1}{2}\sigma^2 - by\right)^2}{\sigma^2} - 2\lambda\right)\psi=0
\end{equation*}
It is easy to check that the following functions solve the above ODE, for each $\theta\geq0$,
\begin{equation*}
\psi(\lambda^{\pm},\theta;y) = \exp\left(C^{\pm}_1(\theta)y + C^{\pm}_2(\theta) y^2\right),
\end{equation*}
with the corresponding
\begin{eqnarray*}
&&\lambda=\lambda^{\pm}(\theta) = \theta(\theta-1) \frac{\left(a+\frac{1}{2}\sigma^2\right)^2}{2\sigma^2} + b\left(\theta \pm \frac{1}{2}\sqrt{\theta(3\theta+1)}\right) \\
&&\phantom{????????????????????????}- \frac{ 2a\theta\left(a+\frac{1}{2}\sigma^2\right) + a\sigma^2}{\sigma^2\left(1 \pm \sqrt{3+1/\theta}\right)}
+ \frac{2a^2}{\sigma^2\left(1 \pm \sqrt{3+1/\theta}\right)^2}
\end{eqnarray*}
and
\begin{eqnarray*}
&& C^{\pm}_1 = 1-\frac{2\theta}{\sigma^2}\left(a+\frac{1}{2}\sigma^2\right) - \frac{2a}{\sigma^2\left(1\pm\sqrt{3+1/\theta}\right)},\\
&& C^{\pm}_2 = \frac{b}{2\sigma^2}\left(2\theta \pm \sqrt{\theta(3\theta+1)}\right)
\end{eqnarray*}
According to Theorem $\ref{th:main.2}$, we can construct $u$ via
\begin{eqnarray}
&&\label{eq.ex1.u} u(t,y,z) = \int_{\RR} \exp\left( -z\theta \right) \left[ \exp(C^+_1(\theta) y + C^+_2(\theta) y^2 - t\lambda^+(\theta)) \nu^+(d\theta)
 \right.\\
&&\phantom{???????????????????????}\left.+ \exp(C^-_1(\theta) y + C^-_2(\theta) y^2 - t\lambda^-(\theta)) \nu^-(d\theta)\right],\nonumber
\end{eqnarray}
for arbitrary Borel measures $\nu^+$ and $\nu^-$ on $\RR$, such that the integral
$$
\int_{\RR} e^{-z\theta} \nu^{\pm}(d\theta)
$$
converges for all $z\in\RR$.
Recall that the function $V$ has to be convex in $x$, which implies that the function $u$ needs to be decreasing in $z$. Therefore, we have to restrict measures $\nu^+$ and $\nu^-$ to have support in $\RR_+$.
Notice that the above family does not contain all nonnegative solutions of equation (\ref{eq.ex1.temp1}): in fact, it does not even include all solutions described by Theorem $\ref{th:main.2}$. Nevertheless, it represents a large family of solutions to (\ref{eq.ex1.temp1}) that can be written in a closed form.

Next, we define functions $\tV, V:(0,\infty)\times\RR\times(0,\infty)\rightarrow \RR$ via:
\begin{equation}\label{eq.ex1.tV.def}
\tV(t,y,x) = \left(u(t,y,\log(.))\right)^{-1}(x)\,\,\,\,\,\,\,\,\,\text{and}\,\,\,\,\,\,\,\,\, V(t,y,x) =\int_0^x \tV(t,y,s) ds
\end{equation}
Using the equation (\ref{eq.ex1.temp1}), it is easy to derive a nonlinear PDE for $\tV$ and notice that the same equation arises from a formal differentiation of the HJB equation (\ref{eq.illPosedHJB.complete}) with respect to $x$. However, as it was mentioned in Subsection $\ref{se:linHJB.complete}$, integrating the PDE for $\tV$, to recover the HJB equation (\ref{eq.illPosedHJB.complete}) for $V$, is not always a trivial task and it may require additional arguments. The following proposition takes care of these technical details. Its proof is based on establishing the appropriate estimates for $u$ and $\tV$, and it is given in Appendix B.

\begin{proposition}\label{prop:ex1.1}
For any $a,b,\sigma>0$ and any Borel measures $\nu^+$, $\nu^-$, with compact supports in $(0,\infty)$, the function $V$, given by (\ref{eq.ex1.u})--(\ref{eq.ex1.tV.def}), is well defined and satisfies the HJB equation (\ref{eq.illPosedHJB.complete}), with $n=k=1$, $\mu(y)=a - by$, and $\sigma(y)=\sigma$.
\end{proposition}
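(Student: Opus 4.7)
The strategy follows the programme from Subsection \ref{se:linHJB.complete}: build a smooth positive solution $u$ of the linear equation (\ref{eq.ex1.temp1}) from the explicit formula (\ref{eq.ex1.u}), invert it in the $z$-variable to recover $\tV=V_x$, and then integrate in $x$ to recover $V$ and the full HJB equation (\ref{eq.illPosedHJB.complete}). Throughout, compactness of $\operatorname{supp}\nu^\pm\subset(0,\infty)$ is what makes all the required estimates uniform.

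First, the coefficients $C_1^\pm(\theta)$, $C_2^\pm(\theta)$, $\lambda^\pm(\theta)$ appearing in (\ref{eq.ex1.u}) are continuous in $\theta\in(0,\infty)$ and hence bounded on the supports of $\nu^\pm$. Consequently the integrand in (\ref{eq.ex1.u}), together with each of its partial derivatives in $(t,y,z)$, is dominated on compacts of $(0,\infty)\times\RR\times\RR$ by a $\theta$-integrable function, so $u\in C^\infty$, $u>0$, and differentiation may be interchanged with the $\theta$-integration. That each elementary summand $e^{-t\lambda^\pm(\theta)+C_1^\pm(\theta)y+C_2^\pm(\theta)y^2-z\theta}$ solves (\ref{eq.ex1.temp1}) is a direct substitution (equivalently, it is Theorem \ref{th:main.2} specialised to the present coefficients), so $u$ itself solves (\ref{eq.ex1.temp1}). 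Differentiating in $z$ pulls out a factor $-\theta<0$, giving $u_z<0$ strictly; monotone convergence then yields $u(t,y,z)\to 0$ as $z\to+\infty$ and $u(t,y,z)\to+\infty$ as $z\to-\infty$ (for nontrivial $\nu^\pm$). Hence $u(t,y,\cdot)\colon\RR\to(0,\infty)$ is a smooth bijection, and by the implicit function theorem the relation $u(t,y,\log\tV(t,y,x))=x$ defines a smooth $\tV$ with $\tV_x<0$.

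Because (\ref{eq.u.complete}) is a smooth invertible change of variables, the chain rule converts the linear equation (\ref{eq.ex1.temp1}) for $u$ into the nonlinear PDE satisfied by $\tV$ that is obtained by differentiating (\ref{eq.illPosedHJB.complete}) once in $x$:
\begin{equation*}
\tV_t-\tfrac{1}{2}\partial_x\!\left[\frac{(\lambda(y)\tV+\sigma\tV_y)^2}{\tV_x}\right]+\tfrac{1}{2}\sigma^2\tV_{yy}+(a-by)\tV_y=0,
\end{equation*}
with $\lambda(y)=(a+\sigma^2/2-by)/\sigma$. Integrating this identity in $s\in(0,x]$ and interchanging $\partial_t,\partial_y,\partial_{yy}$ with $\int_0^x ds$ yields the HJB equation (\ref{eq.illPosedHJB.complete}) for $V(t,y,x)=\int_0^x\tV(t,y,s)ds$, provided that the boundary contribution $\lim_{s\to 0^+}(\lambda(y)\tV+\sigma\tV_y)^2/\tV_x$ vanishes and that the exchanges of derivative and integral are justified. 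This is the technical crux of the proposition. Because $\nu^\pm$ have compact support in $(0,\infty)$, a Laplace/Watson-type analysis shows that as $z\to+\infty$ the integral (\ref{eq.ex1.u}) concentrates near the lower edge $\theta_*:=\inf\operatorname{supp}(\nu^++\nu^-)>0$, giving $u\asymp e^{-\theta_*z}$ together with matching asymptotics for $u_z,u_{zz},u_y,u_{yz}$; a symmetric argument as $z\to-\infty$ controls the behaviour of $\tV$ for large $x$. Inverting $u(t,y,\log\tV)=s$ transforms these into power-law bounds on $\tV$ and its derivatives as $s\to 0^+$, and a direct comparison of exponents then delivers both the vanishing of the boundary term and the integrability required to differentiate under the integral sign. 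This asymptotic step is the main obstacle; the remainder of the argument is bookkeeping with the chain rule.
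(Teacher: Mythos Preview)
Your overall strategy coincides with the paper's: both verify that $u$ solves (\ref{eq.ex1.temp1}), invert to obtain $\tV$, observe that the resulting PDE for $\tV$ is the $x$-derivative of (\ref{eq.illPosedHJB.complete}), and then integrate in $x$, checking that the boundary contribution at $x\downarrow0$ vanishes and that $\partial_t,\partial_y,\partial_{yy}$ commute with $\int_0^x$. The difference lies entirely in how this last step is justified. The paper uses no Laplace/Watson asymptotics; instead it exploits the elementary observation that, with $\operatorname{supp}\nu^\pm\subset[1+\eta,1/\eta]$, ratios such as $-u_z/u$ and $u_y/u$ are weighted averages of quantities bounded on the support, yielding directly $-u_z/u\in[1+\eta,1/\eta]$, $|u_y/u|\le c(1+|y|)$, $|u_{zz}/u_z|\le 1/\eta$, and $|u_{yy}/u|\le c(1+y^2)$. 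Pushed through the inverse-function relations these become explicit pointwise bounds like $(1+\eta)x\le -\tV/\tV_x\le x/\eta$, $|\tV_y/\tV_x|\le c(1+|y|)x$, and $\tV\le c\,x^{-1/(1+\eta)}$ near $x=0$, from which the integrability of $\tV,\tV_y,\tV_{yy}$ on $(0,x]$ and the vanishing of the boundary term are immediate.

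Your asymptotic claim $u\asymp e^{-\theta_*z}$ as $z\to+\infty$ is in fact too strong: depending on the local structure of $\nu^\pm$ near $\theta_*$ one only gets $u\sim c\,z^{-\alpha}e^{-\theta_*z}$ or weaker, so Laplace's method yields at best $\log u/z\to-\theta_*$. What is both true and sufficient is the two-sided bound $c_1 e^{-\theta^*z}\le u\le c_2 e^{-\theta_*z}$ for $z\ge0$ (and its analogue for $z\le0$), together with the ratio bounds above; these follow trivially from compactness of the support, with no asymptotic analysis at all. Replacing your Laplace step by these elementary ratio estimates makes the ``comparison of exponents'' precise and completes the argument exactly as in the paper.
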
 

Let us show that $V(t,Y_t,x)$ is a forward performance process. Since $V$ satisfies the HJB equation, it is easy to deduce that, for any portfolio $\pi$, there exists a localizing sequence $\left\{\tau_n\right\}$, such that the process
\begin{equation*}
\left(V(t,Y_{t},X^{\pi,x}_{t})\right)_{t\geq0},
\end{equation*}
stopped at $\tau_n$, is a supermartingale. Function $V$, by construction, is strictly positive, hence, a standard application of Fatou's lemma shows that the above process is a supermartingale itself. Let us now construct the optimal wealth process.
According to (\ref{wealth-sde}), it should satisfy
\begin{eqnarray*}
&&dX_{t}^* = - \frac{1}{\sigma}\left(a + \frac{1}{2}\sigma^2 - bY_t\right) \frac{\frac{1}{\sigma} \left(a + \frac{1}{2}\sigma^2 - bY_t\right) V_x(t,Y_t,X^*_t) + \sigma V_{xy}(t,Y_t,X^*_t) }{V_{xx}(t,Y_t,X^*_t) }  dt \\
&&\phantom{?????????????????}-\frac{\frac{1}{\sigma} \left(a + \frac{1}{2}\sigma^2 - bY_t\right) V_x(t,Y_t,X^*_t) + \sigma V_{xy}(t,Y_t,X^*_t) }{V_{xx}(t,Y_t,X^*_t) } dW_{t}
\end{eqnarray*}
Due to the smoothness of $\tV$, the solution $X^*$ to the above equation is uniquely defined for any initial condition $X^*_0>0$, up to the explosion time.
The estimates (\ref{eq.meanRev.ratioEst}), in turn, imply that the logarithm of $X^*$ (defined, again, up to the explosion time) satisfies:
\begin{eqnarray*}
&&d\log X_{t}^* = \xi_t  dt  + \zeta_t dW_{t},\,\,\,\,\,\,\,\,\,\,
\left|\xi_t\right| \leq c_5(1+Y^2_t),\,\,\,\,\,\,\,\,\,\left|\zeta_t\right| \leq c_5(1+|Y_t|),
\end{eqnarray*}
with a constant $c_3>0$, depending only upon $a$, $b$, $\sigma$ and $\eta$. Since $Y_t$ has finite moments of any order, $X_t$ is square integrable, for any $t$. Hence, $\log(X)$ is a non-exploding continuous process, and, therefore, $X^*$ is strictly positive and non-exploding. The following proposition implies that $V(t,Y_t,x)$ is a forward performance process and, thus, completes the construction. Its proof is given in Appendix B.

\begin{proposition}\label{prop:ex1.2}
The process $\left( V(t,Y_t,X^*_t) \right)_{t\geq0}$ is a martingale.
\end{proposition}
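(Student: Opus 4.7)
My plan is to upgrade the local-martingale property of $V(t,Y_t,X^*_t)$ to a true martingale via the standard device hinted at in the main text: show that $\EE[V(t,Y_t,X^*_t)] = V(0,Y_0,X^*_0)$ for every $t \geq 0$. Since $V$ is nonnegative, the local-martingale property together with Fatou's lemma already gives the supermartingale inequality, so only the reverse inequality is at stake, and this will follow from uniform integrability of the stopped family $\{V(t\wedge \tau_n, Y_{t\wedge\tau_n}, X^*_{t\wedge\tau_n})\}_n$ along any localizing sequence $\tau_n\uparrow\infty$.

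The first step is to exploit the explicit representation (\ref{eq.ex1.u})--(\ref{eq.ex1.tV.def}) together with the compact support of $\nu^{\pm}$ in $(0,\infty)$. Let $[\theta_-,\theta_+]\subset(0,\infty)$ contain $\mathrm{supp}(\nu^+)\cup\mathrm{supp}(\nu^-)$. Since $C^{\pm}_1(\theta)$, $C^{\pm}_2(\theta)$, $\lambda^{\pm}(\theta)$ are continuous, they are bounded over $[\theta_-,\theta_+]$, which yields two-sided bounds
\[
\kappa_1 e^{-c_1(1+y^2)}\bigl(e^{-\theta_+ z^+ + \theta_- z^-}\bigr) \leq u(t,y,z) \leq \kappa_2 e^{c_2(1+y^2)}\bigl(e^{-\theta_- z^+ + \theta_+ z^-}\bigr),
\]
for $t\in[0,T]$, with $z^{\pm}=\max(\pm z,0)$. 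Inverting in $z$ (using that $u$ is strictly decreasing in $z$) and recalling $V_x(t,y,x) = e^z$ with $u(t,y,z)=x$, this produces pointwise bounds $V_x(t,y,x) \leq e^{c_3(1+y^2)}\bigl(x^{-1/\theta_-} + x^{-1/\theta_+}\bigr)$. Integrating in $x$ (the well-definedness of $V$ asserted in Proposition \ref{prop:ex1.1} forces $\theta_->1$), one obtains
\[
V(t,y,x) \leq e^{c_4(1+y^2)}\bigl(x^{1-1/\theta_-} + x^{1-1/\theta_+}\bigr),
\]
uniformly on $t\in[0,T]$, with a power $p_+ := 1-1/\theta_+\in(0,1)$ governing the large-$x$ growth.

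The second step is to match this growth against the integrability properties of $(Y,X^*)$. The factor $Y$ is Ornstein-Uhlenbeck with Gaussian marginals of variance bounded by $\sigma^2/(2b)$, so $\EE[e^{c_4 Y_t^2}]<\infty$ whenever $c_4$ is small, which can be secured by tightening the support $[\theta_-,\theta_+]$ if needed. The estimates on $\log X^*$ recalled before the proposition, namely $|\xi_t|\leq c(1+Y_t^2)$ and $|\zeta_t|\leq c(1+|Y_t|)$, combined with the Gaussian tails of $Y$ and a standard stochastic exponential/Novikov argument, yield $\EE[(X^*_t)^q]<\infty$ for every $q\in\RR$ and every $t$, together with a uniform bound on $\EE[e^{c_4 Y_t^2}(X^*_t)^{p}]$ for some $p>p_+$. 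This upgrades the uniform-in-$n$ bound on $V(t\wedge\tau_n,\ldots)$ from $L^1$ to $L^{p/p_+}$ with exponent strictly greater than $1$, giving the required uniform integrability.

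With uniform integrability in place, a.s.\ convergence along $\tau_n\uparrow\infty$ (guaranteed by the continuity and non-explosion of $(Y,X^*)$) plus dominated convergence give $\EE[V(t,Y_t,X^*_t)]=\lim_n \EE[V(t\wedge\tau_n,\ldots)] = V(0,Y_0,X^*_0)$, completing the martingale property. The main technical hurdle is the quantitative bookkeeping in the first step: carefully tracking the Gaussian exponent $c_4$ through the inversion $u\mapsto V_x$ and the subsequent $x$-integration, and then verifying that $c_4 < b/\sigma^2$ (so that $\EE[e^{c_4 Y_t^2}]$ is finite uniformly in $t$)---this is where the explicit closed forms for $C^\pm_i(\theta)$ and $\lambda^\pm(\theta)$ on the compact support of $\nu^\pm$ are essential.
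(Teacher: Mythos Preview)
Your strategy---bound $V(t,y,x)$ pointwise and then establish uniform integrability of the stopped family---is different from the paper's, and it contains a genuine gap at the step where you claim the Gaussian exponent $c_4$ can be made small.

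First, you are not free to ``tighten the support $[\theta_-,\theta_+]$'': the measures $\nu^{\pm}$ are part of the data of the proposition, so their supports are fixed. Second, and more seriously, even if you could vary the support, the resulting constant does not drop below the threshold $b/\sigma^2$ that you yourself identify as necessary for $\EE[e^{c_4 Y_t^2}]<\infty$ uniformly in $t$. To see this, recall that the upper bound on $\tilde V=e^{z^*}$ comes from the \emph{upper} bound on $u$, and since $C_2^{+}(\theta)=\frac{b}{2\sigma^2}\bigl(2\theta+\sqrt{\theta(3\theta+1)}\bigr)>0$ is the coefficient of $y^2$ in the exponent of the $\nu^+$-integrand, the inversion produces a coefficient of $y^2$ in $\log\tilde V$ of order $C_2^{+}(\theta)/\theta=\frac{b}{2\sigma^2}\bigl(2+\sqrt{3+1/\theta}\bigr)$. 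For every $\theta>1$ this exceeds $\frac{b}{2\sigma^2}(2+\sqrt{3})>b/\sigma^2$. Hence whenever $\nu^+\ne 0$ your $c_4$ is strictly larger than $b/\sigma^2$, and since $\mathrm{Var}(Y_t)\to\sigma^2/(2b)$ as $t\to\infty$, the moment $\EE[e^{c_4 Y_t^2}]$ blows up in finite time. The subsequent H\"older step, which needs $\EE[e^{qc_4 Y_t^2}]<\infty$ for some $q>1$, makes matters worse. Your approach therefore proves the proposition only in the degenerate case $\nu^+=0$ (where the relevant coefficient is $C_2^-(\theta)/\theta<b/\sigma^2$), not in the generality claimed.

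The underlying reason the decoupled bound fails is that along the optimal trajectory $X^*$ and $Y$ are strongly correlated, and separating $V(t,Y_t,X^*_t)$ into a product of a function of $Y_t$ and a function of $X^*_t$ throws this away. The paper instead applies It\^o's formula to $\log V(t,Y_t,X^*_t)$ and obtains the exact stochastic-exponential representation
\[
d\log V(t,Y_t,X^*_t)=-\tfrac12 Z_t^2\,dt+Z_t\,dW_t,
\]
with $|Z_t|\le c(1+|Y_t|)$ following from the estimates (\ref{eq.meanRev.ratioEst}). Novikov's condition in its ``salami'' form (i.e.\ verified on short subintervals $[t,t+\Delta]$) then reduces the problem to $\EE\exp\bigl(c\Delta\sup_{s\le T'}W_s^2\bigr)<\infty$ for small $\Delta$, which is elementary. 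This route never requires an exponential-of-$Y_t^2$ moment and works for all admissible $\nu^{\pm}$.
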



\subsection{Mean-reverting log-volatility}

Here, we consider an example of homothetic forward performance process in a two-factor stochastic volatility model, discussed in Subsection $\ref{se:linHJB.homothetic}$, for which the verification procedure (in particular, the verification of the martingale property) becomes very simple.
Consider a two-factor stochastic volatility model for a single risky asset (i.e. $n=2$ and $k=1$), driven by a two-dimensional Brownian motion $W=(W^1,W^2)$ (i.e. $d=2$), via:
\begin{equation*}
\left\{
\begin{array}{l}
{dS_{t}= S_{t}\left(\kappa - \mu Y_t\right) \exp\left(Y_t\right) dt + S_{t} \exp\left(Y_t\right) dW_{t}^{1},}\\
{dY_t = \left(a - b Y_t\right) dt + \sigma \left(\rho dW^1_t + \sqrt{1-\rho^2} dW^2_t\right),}
\end{array}
\right.
\end{equation*}
where $a\in\RR$, $b>0$, $\kappa\in\RR$, $\mu\geq0$, and $\sigma>0$ are constants. As usual, the interest rate is assumed to be zero.
An additional assumption on $b/\sigma$ is made further in this section.
Notice that the stochastic factor $Y$, in the above model, controls both the spot volatility, $\exp(Y_t)$, and the instantaneous drift. In particular, when the volatility is very large, the drift becomes negative, and vice versa. The stochastic factor itself exhibits a mean-reverting behavior.
As before, we would like to describe a family of functions $V:\RR_+\times\RR\times(0,\infty)\rightarrow \RR$, such that $V(t,Y_t,x)$ is a forward performance process. We make the additional assumption of homothetic preferences:
\begin{equation*}
V(t,y,x) = \frac{x^{\gamma}}{\gamma} v(t,y),
\end{equation*}
for some non-zero constant $\gamma<1$ and function $v:\RR_+\times\RR\rightarrow \RR$ which is yet to be determined.
Thus, we are in the setup of Subsection $\ref{se:linHJB.homothetic}$.
Introducing
\begin{equation*}
u(t,y) =\left(v(t,y)\right)^{1/\delta},\,\,\,\,\,\,\text{with}\,\,\,\,
\delta=\frac{1-\gamma}{1-\gamma +\rho^2\gamma},
\end{equation*}
we notice that, in this case, equation (\ref{eq.illPosedHJB.homothetic}) becomes
\begin{equation*}
u_{t} + \frac{1}{2}\sigma^2 u_{yy} + \left( a - by +\rho\sigma \frac{\gamma }{1-\gamma }(\kappa-\mu y) \right) u_{y}
+ \frac{1}{2\delta }\frac{\gamma }{1-\gamma }(\kappa-\mu y)^2 u=0
\end{equation*}
Applying Theorem $\ref{th:main}$, we reduce the problem to equation (\ref{eq.elliptic}), which, in the present case, becomes
\begin{equation*}
\frac{1}{2}\sigma^2 \psi_{yy} + \left(a - by +\rho\sigma \frac{\gamma }{1-\gamma } (\kappa-\mu y)\right) \psi_y
+ \left(\frac{1}{2\delta }\frac{\gamma }{1-\gamma }(\kappa-\mu y)^2 - \lambda\right)\psi=0
\end{equation*}
It is, then, easy to check that the following functions
\begin{equation*}
\psi(\lambda^{\pm};y) = \exp\left(C^{\pm}_1 y + C^{\pm}_2 y^2\right),
\end{equation*}
solve the above ODE, with the corresponding
\begin{equation*}
\lambda^{\pm}=\sigma^2\left(\frac{1}{2}\left(C^{\pm}_1\right)^2 + C_2^{\pm}\right) + C_1^{\pm}\left(a+\rho\sigma\kappa\frac{\gamma}{1-\gamma}\right)
+ \frac{1}{2\delta}\frac{\gamma}{1-\gamma} \kappa^2
\end{equation*}
and
\begin{eqnarray*}
&& C_1^{\pm} = \pm\frac{\frac{\kappa\mu}{\sigma}\frac{\gamma}{1-\gamma}\left(1+\frac{\rho^2\gamma}{1-\gamma}\right)
- 2C_2^{\pm} \left(\frac{a}{\sigma} + \kappa\rho\frac{\gamma}{1-\gamma}\right) }{\sqrt{\left(\frac{b}{\sigma} + \mu \rho\frac{\gamma}{1-\gamma}\right)^2 - \frac{\mu^2}{\delta}\frac{\gamma}{1-\gamma}}},\\
&& C_2^{\pm} = \frac{1}{2}\left(\frac{b}{\sigma} + \mu \rho\frac{\gamma}{1-\gamma}\right)\pm
\frac{1}{2}\sqrt{\left(\frac{b}{\sigma} + \mu \rho\frac{\gamma}{1-\gamma}\right)^2 - \frac{\mu^2}{\delta}\frac{\gamma}{1-\gamma}},
\end{eqnarray*}
where it is assumed that
\begin{equation}\label{eq.ex1.assump}
\frac{b}{\sigma} \geq \mu\left(\sqrt{\rho^2\frac{\gamma^2}{(1-\gamma)^2} + \frac{\gamma}{1-\gamma}} - \rho \frac{\gamma}{1-\gamma} \right)
\end{equation}
In particular, the function
\begin{equation*}
u(t,y) = \nu^+ e^{-t\lambda^+} \exp\left(C^{+}_1 y + C^{+}_2 y^2\right) +
\nu^- e^{-t\lambda^-} \exp\left(C^{-}_1 y + C^{-}_2 y^2\right)
\end{equation*}
solves (\ref{eq.illPosedHJB.homothetic}), and, therefore, the following function is a solution to the forward HJB equation (\ref{eq.illPosedHJB}):
\begin{equation*}
V(t,y,x) = \frac{x^{\gamma}}{\gamma}\left( \nu^+ e^{-t\lambda^+} \exp\left(C^{+}_1 y + C^{+}_2 y^2\right) + \nu^- e^{-t\lambda^-} \exp\left(C^{-}_1 y + C^{-}_2 y^2\right) \right)^{\delta},
\end{equation*}
for arbitrary $\nu^+,\nu^-\geq0$.
As in the previous example, it is straightforward to check that, for any portfolio $\pi$, the process $\left(V(t,Y_{t},X^{\pi,x}_{t})\right)_{t\geq0}$ is a supermartingale. The equation for the optimal wealth process becomes
\begin{equation}
dX_{t}^* = \frac{X^*_t}{1-\gamma} \left(\kappa-\mu Y_t\right) \left( \kappa-\mu Y_t  + \sigma\rho
\frac{ u_y\left(t,Y_t\right) }{ u\left(t,Y_t\right) } \right) dt 
+ \frac{X^*_t}{1-\gamma} \left( \kappa-\mu Y_t  + \sigma\rho
\frac{ u_y\left(t,Y_t\right) }{ u\left(t,Y_t\right) }  \right) dW^1_{t}\label{eq.ex2.Xstar}
\end{equation}
It is easy to see that
\begin{equation}\label{eq.ex2.uy}
\left|\frac{ u_y\left(t,y\right) }{ u\left(t,y\right) } \right| \leq c_6 (1+|y|)
\end{equation}
Hence, we conclude that, for any initial condition $X^*_0>0$, the equation (\ref{eq.ex2.Xstar}) has a unique strong solution $X^*$ which is strictly positive. To show that $V(t,Y_t, x)$ is a forward performance process, it only remains to apply the following proposition, whose proof is given in Appendix B.

\begin{proposition}\label{prop:ex2.1}
The process $\left( V(t,Y_t,X^*_t) \right)_{t\geq0}$ is a martingale.
\end{proposition}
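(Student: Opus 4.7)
The plan is to first establish the local martingale property via It\^o's formula, then upgrade to a true martingale by verifying that the expectation stays constant. Since $V(t,y,x) = \frac{x^{\gamma}}{\gamma} u(t,y)^{\delta}$ is constructed from a solution $u$ to the linearized HJB equation $(\ref{eq.illPosedHJB.homothetic})$, a direct computation shows that $V$ itself satisfies the nonlinear equation $(\ref{eq.illPosedHJB})$, and that the portfolio $\pi^{*}$ driving $X^{*}$ via $(\ref{eq.ex2.Xstar})$ achieves the pointwise maximum in $(\ref{eq.illPosedHJB})$. Applying It\^o's formula to $V(t,Y_{t},X^{*}_{t})$ therefore yields a representation in which the drift cancels, giving a positive local martingale. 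Fatou's lemma then makes it a supermartingale, so the proposition reduces to showing $\EE V(t,Y_{t},X^{*}_{t}) = V(0,Y_{0},X^{*}_{0})$ for every $t>0$.

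For the expectation identity, I would exploit the explicit structure: $u(t,y) = \nu^{+} e^{-t\lambda^{+}}\exp(C_{1}^{+}y+C_{2}^{+}y^{2}) + \nu^{-}e^{-t\lambda^{-}}\exp(C_{1}^{-}y+C_{2}^{-}y^{2})$, so $u(t,Y_{t})^{\delta}$ is dominated by an explicit exponential quadratic in $Y_{t}$. Combined with $(\ref{eq.ex2.Xstar})$ and the linear bound $(\ref{eq.ex2.uy})$, the process $(X^{*}_{t})^{\gamma}$ can be written as the exponential of an It\^o integral whose drift is at most quadratic in $Y_{s}$ and whose volatility is at most linear in $Y_{s}$. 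Since $Y$ is an Ornstein--Uhlenbeck process, the triple $\bigl(Y_{t},\int_{0}^{t}Y_{s}\,ds,\int_{0}^{t}Y_{s}\,dW^{1}_{s}\bigr)$ is jointly Gaussian, so the integrand $V(t,Y_{t},X^{*}_{t})$ is a mixture of exponentials of Gaussian quadratic forms. Finiteness of all these moments upgrades the supermartingale to a uniformly integrable martingale on each compact interval, giving the desired equality.

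The main obstacle is the verification that the quadratic-Gaussian exponential moments are indeed finite, since the coefficient of $Y_{t}^{2}$ in the exponent is a sum of contributions coming from $C_{2}^{\pm}$ (through $u(t,Y_{t})^{\delta}$) and from the It\^o correction in $\log X^{*}_{t}$ (through $(u_{y}/u)^{2}$, which is quadratic in $Y_{s}$). This is exactly where assumption $(\ref{eq.ex1.assump})$ is essential: it makes $C_{2}^{\pm}$ real and, more importantly, forces the net quadratic coefficient in the exponent of $V(t,Y_{t},X^{*}_{t})$ to stay below the critical threshold beyond which Gaussian quadratic exponentials blow up. An alternative, cleaner route is to define a candidate measure $\QQ$ by $\frac{d\QQ}{d\PP}|_{\cF_{t}} = V(t,Y_{t},X^{*}_{t})/V(0,Y_{0},X^{*}_{0})$ and to check, via Girsanov, that under $\QQ$ the process $Y$ satisfies an SDE with linear drift (since $u_{y}/u$ is affine in $y$ with bounded coefficients) and constant diffusion, hence remains a non-exploding Gaussian process. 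Assumption $(\ref{eq.ex1.assump})$ guarantees that the new mean-reversion coefficient is still positive, so $\QQ$ is a genuine probability measure and the density is a true martingale; this is the argument I would prefer to write up.
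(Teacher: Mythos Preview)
Your overall strategy --- reduce to a stochastic exponential and then certify it is a true martingale --- is exactly what the paper does, but the certification step the paper uses is quite different and much shorter. Writing $d\log V(t,Y_t,X^*_t)=-\tfrac12(Z_t^2+N_t^2)\,dt+Z_t\,dW^1_t+N_t\,dW^2_t$, the paper observes from \eqref{eq.ex2.uy} that $|Z_t|+|N_t|\le c(1+|Y_t|)$, and then applies Novikov's condition in its ``salami'' form (cf.\ the end of the proof of Proposition~\ref{prop:ex1.2}): for any $T$, one finds $\Delta>0$ with $\EE\exp\bigl(\tfrac12\int_t^{t+\Delta}(Z_s^2+N_s^2)\,ds\bigr)<\infty$ for all $t\in[0,T]$, using the explicit representation of the Ornstein--Uhlenbeck process as a time-changed Brownian motion. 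No quadratic-Gaussian threshold, no change of measure, no assumption beyond the linear growth bound is needed.

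Both of your routes, as written, have gaps. In Route~A you assert that $\bigl(Y_t,\int_0^tY_s\,ds,\int_0^tY_s\,dW^1_s\bigr)$ is jointly Gaussian; this is false whenever $\rho\neq0$, since $Y$ is itself driven by $W^1$ and, for instance, in the degenerate case $Y=W^1$ one gets $\int_0^tW^1_s\,dW^1_s=\tfrac12((W^1_t)^2-t)$, which is chi-square. So the ``exponential of a Gaussian quadratic form'' description of $V(t,Y_t,X^*_t)$ does not hold, and the moment calculation you propose cannot be carried out as stated. In Route~B you claim that $u_y/u$ is affine in $y$; this is true only when one of $\nu^{\pm}$ vanishes, since $u$ is a \emph{sum} of two exponentials $\exp(C_1^{\pm}y+C_2^{\pm}y^2)$. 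What \eqref{eq.ex2.uy} gives you is merely a linear growth bound, not affinity, so under the candidate $\QQ$ the drift of $Y$ is not linear and there is no single ``mean-reversion coefficient'' whose positivity you can read off from \eqref{eq.ex1.assump}. A non-explosion argument under $\QQ$ can still be made to work from the linear growth of the drift, but it is no simpler than the Novikov argument and, as you describe it, is not yet correct.
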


\begin{remark}
It is worth mentioning that the optimal wealth process, defined by (\ref{eq.ex2.Xstar}), is monotone in the initial wealth. This observation shows that the forward performance process constructed in this example belongs to the class of processes characterized in \cite{ElKaroui.2}. In fact, it is easy to see that the same is true for any homothetic forward performance process, defined in Subsection $\ref{se:linHJB.homothetic}$. 
As discussed in the introduction, this paper does not aim to generalize the space of forward performance processes, and, in particular, we do not consider more general processes than those studied in \cite{ElKaroui.2}. Instead, this work provides a new, convenient, representation of a large class of these random fields. Namely, the representation provided herein allows one to start with the economically meaningful input elements (the stochastic factor $Y$ and the investor's initial preferences $U_0$) and determine the associated forward performance process, from this input, uniquely.
\end{remark}


\section{Summary}

We have described a new approach to constructing investment strategies with optimal payoffs at all positive time horizons, where the associated optimality criteria are given by the forward investment performance processes. We outlined the main difficulties associated with the construction of the forward performance processes and summarized the existing results in this direction.

We, then, demonstrated that the theory of forward performance admits an axiomatic justification, in the spirit of classical expected utility theory. 
Motivated by the axiomatic approach, we proposed a new representation of the forward performance processes, using the parameters that have direct economic interpretation. 
In a Markovian setting, the proposed representation lead us to the analysis of forward investment performance processes in a factor form.

We characterized the forward performance processes in a factor form via solutions to a time-reversed HJB equation. In the case when this equation can be linearized, we obtained an explicit integral representation of its nonnegative solutions. In particular, our results allow to construct the forward performance process in a factor form (explicitly, or as a numerical solution to a standard elliptic PDE), given its initial value (the investor's initial preferences) and a diffusion model for the associated stochastic factor.

In the course of our study, we have obtained a generalization of Widder's theorem on the representation of all positive solutions to a time-reversed parabolic PDE on a semi-infinite time interval. In order to do this, we combined the existing characterization of the minimal elements of the space of all positive solutions with some basic facts from Potential Theory and Convex Analysis. From a probabilistic point of view, our results provide a representation of the Martin boundary of a space-time diffusion via the Martin boundary of the diffusion process itself.

Further research should address the problem of solving the time-reversed HJB equation itself. In addition to all the difficulties associated with the standard HJB equation, this problem is ill-posed, as it has ``time running in a wrong direction". This feature makes it very hard to determine the initial conditions for which the solutions exist, as well as to find a tractable description of the resulting solutions.

Another related problem is the calibration of a forward performance process to the investor's initial preferences. 
Our study shows that, in many cases, the forward performance process is uniquely determined by its value at time zero. We have seen that the latter should be interpreted as a state dependent utility function which describes the investor's preferences at a short time horizon. In order to complete the analysis, it is important to develop a reliable algorithm for determining this function from investor's choices.

\section{Appendix A}

In this appendix, we recall some standard technical results.

\subsection{Parabolic PDE}
Firstly, we are interested in quantitative properties of the solutions to the parabolic PDE (\ref{eq.reversedHeatEq}), with the differential operator $\mathcal L_y$ defined in (\ref{eq.Ly.def}) and in the subsequent paragraph. 
We make use of the following version of Harnack's inequality.
\begin{theorem}(Harnack's inequality)
Suppose $u$ is a nonnegative solution to (\ref{eq.reversedHeatEq}) in $(0,\infty)\times\RR^n$.
Then, for any $R>0$, there exists a constant $C(R) > 0$, depending only on $R$, on the upper bounds of the absolute values of the coefficients in $\mathcal{L}_y$, and on the lower and upper bounds of the associated quadratic form, such that
$$
\sup_{ \|y\| \leq 1} u(R, y) \leq C(R) u(0,0).
$$
\end{theorem}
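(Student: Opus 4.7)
The plan is to reduce this to the classical parabolic Harnack inequality for nonnegative solutions of a uniformly parabolic equation in the standard time direction. The first move would be the time reversal $v(t,y) := u(R-t,y)$ on $[0,R]\times\RR^n$. Differentiating, one immediately checks that
$$
v_t(t,y) - \mathcal{L}_y v(t,y) = -u_t(R-t,y) - \mathcal{L}_y u(R-t,y) = 0,
$$
so $v$ is a nonnegative classical solution of the \emph{forward} uniformly parabolic equation $v_t = \mathcal{L}_y v$ with the same (uniformly elliptic, bounded, H\"older) coefficients. The claim then rewrites as $\sup_{\|y\|\leq 1} v(0,y) \leq C(R)\, v(R,0)$, which is exactly the ``value at an earlier time and nearby point controlled by value at a later point'' form of the standard parabolic Harnack inequality.

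Next I would apply the pointwise parabolic Harnack inequality for solutions of uniformly parabolic equations in nondivergence form (Krylov--Safonov; see, e.g., Chapter VII of Lieberman's \emph{Second Order Parabolic Differential Equations}): for every pair of points $(s_1,y_1)$ and $(s_2,y_2)$ with $s_1 < s_2$, a nonnegative solution satisfies
$$
v(s_1,y_1) \leq K(s_2 - s_1, |y_2 - y_1|)\, v(s_2,y_2),
$$
where the constant depends only on the time lag, the spatial displacement, the ellipticity constants, and the $L^\infty$ norms of the coefficients of $\mathcal{L}_y$. Taking $(s_1,y_1) = (0,y)$ with $\|y\|\leq 1$ and $(s_2,y_2) = (R,0)$ yields $v(0,y) \leq K(R, \|y\|)\, v(R,0)$, with $K(R,\cdot)$ continuous on the compact set $\{\|y\|\leq 1\}$.

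To upgrade to the uniform bound, I would simply set $C(R) := \sup_{\|y\|\leq 1} K(R,\|y\|)$, which is finite by continuity on a compact set. Alternatively, one can chain a finite number of applications of the local cylindrical Harnack inequality along a straight-line path from $(0,y)$ to $(R,0)$; the number and geometry of the cylinders depend only on $R$ and on the ellipticity and coefficient bounds, which makes the constant manifestly independent of the particular $y$ with $\|y\|\leq 1$.

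The essential analytic content is already contained in the classical Krylov--Safonov theorem, so the main obstacle is purely bookkeeping: one must verify that the Harnack constant depends on no more than the quantities listed in the theorem statement (the ellipticity bounds and the $L^\infty$ norms of the coefficients of $\mathcal{L}_y$), and not, for instance, on the modulus of continuity of $u$ or on its growth at infinity. This dependence is precisely what the standing hypotheses of Section $\ref{se:timeReversedHeat}$ guarantee: uniform ellipticity (\ref{eq.strictEllip}) together with H\"older-continuous, absolutely bounded coefficients. Once that dependence is checked, both the reversal step and the compactness/chaining step are routine, and the proof concludes.
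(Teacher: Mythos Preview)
Your proposal is correct and follows essentially the same approach as the paper: the paper's own proof is a one-line reduction, via time reversal (and a spatial shift), to a standard parabolic Harnack inequality quoted from the literature. You have simply supplied the details of that reduction and cited Krylov--Safonov/Lieberman in place of the paper's reference, so the two arguments coincide.
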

\begin{proof} This statement follows immediately from Theorem 1.1 of \cite{HarnackUniform}, after time reversal and shifting the space variable, in the PDE considered in \cite{HarnackUniform}.
\end{proof} 

The second result which is needed repeatedly is a version of the interior Schauder estimate.
Define the H\"older norm on a domain $D \subseteq \RR^{1+n}$ by
$$
\| v \|_{D, \alpha} = \sup_{(t,y) \in D} | v(t,y) | + \sup_{(s,x), (t,y) \in D} \frac{\| v(t,y) - v(s,x)\|}{\|y-x\|^{\alpha} + |t-s|^{\alpha/2}}.
$$ 
For $\varepsilon > 0$ and $T > 0$, let  
$$
D_{\varepsilon}^T = \{ (t,y): \varepsilon \| y\|^2 \le t \le T \}.
$$

\begin{theorem}[Interior Schauder estimate]
Assume that the coefficients of $\mathcal L_y$ are H\"older-continuous with the H\"older exponent $0 < \alpha < 1$. Then for any positive $\varepsilon, T$ and $\delta$, there 
exists a constant $C > 0$, depending on $\varepsilon, T, \delta$, and on the coefficients of $\mathcal L_y$,
such that
$$
\| u \|_{D_{\varepsilon}^T, \alpha} + \delta^{(1+\alpha)/2} \| \partial_y u \|_{D_{\varepsilon}^T, \alpha}  + \delta^{1+\alpha/2} \| \partial^2_y u \|_{D_{\varepsilon}^T, \alpha}  
+ \delta^{1+\alpha/2} \| u_t \|_{D_{\varepsilon}^T, \alpha} 
\le C  \sup_{(t,y) \in D_{\varepsilon}^{T+\delta}} | u(t,y) |.
$$
\end{theorem}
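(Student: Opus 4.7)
The theorem is a repackaging of the classical interior parabolic Schauder estimate tailored to the parabolic-shaped family $\{D_\varepsilon^T\}$ of domains. My plan has three ingredients: a local Schauder estimate on small parabolic cylinders, a covering of $D_\varepsilon^T$ by such cylinders of radius $r \sim \sqrt{\delta}$, and a boundary-straightening argument to deal with points close to the lower boundary $\{t = \varepsilon \|y\|^2\}$.

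For the local ingredient I would use the standard result (Theorem~1 of \cite{Schauder}, or e.g.\ Friedman's book): if $u$ solves $u_t + \mathcal{L}_y u = 0$ on a neighborhood of the parabolic cylinder $Q_{2r}(t_0, y_0) = (t_0 - 4r^2, t_0 + 4r^2) \times B_{2r}(y_0)$, then
$$
\sum_{j=0}^{2} r^{j+\alpha}[\partial_y^j u]_{C^{\alpha,\alpha/2}(Q_r)} + r^{2+\alpha}[u_t]_{C^{\alpha,\alpha/2}(Q_r)} + \sum_{j=0}^{2} r^j \sup_{Q_r} |\partial_y^j u| + r^2 \sup_{Q_r} |u_t| \leq C_0 \sup_{Q_{2r}} |u|,
$$
where $C_0$ depends only on $\alpha$ and the bounds and ellipticity constants of $\mathcal{L}_y$. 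This is obtained from the unit-cylinder estimate by the rescaling $(s, x) \mapsto (t_0 + r^2 s, y_0 + r x)$, which sends a uniformly parabolic operator with $\alpha$-H\"older coefficients to an operator of the same type with comparable constants. Setting $r = c(\varepsilon)\sqrt{\delta}$ with $c(\varepsilon)$ sufficiently small, the cylinder $Q_{2r}(t_0, y_0)$ is contained in $D_\varepsilon^{T+\delta}$ for every $(t_0, y_0)$ in the ``safely interior'' region $\mathcal{A} := \{(t,y) \in D_\varepsilon^T : t - \varepsilon\|y\|^2 \geq \eta(\varepsilon, \delta)\}$ (using $t_0 + 4r^2 \leq T+\delta$ and $t_0 - 4r^2 \geq \varepsilon(\|y_0\|+2r)^2$), and the local estimate applies directly there.

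The main obstacle is the thin strip $\mathcal{B} := D_\varepsilon^T \setminus \mathcal{A}$ adjacent to the parabola, where no cylinder of size $\sqrt{\delta}$ fits inside $D_\varepsilon^{T+\delta}$ because the two domains share that parabolic boundary. To handle $\mathcal{B}$ I would perform the change of variables $\tau = t - \varepsilon\|y\|^2$, which flattens the parabola to the hyperplane $\{\tau = 0\}$; the transformed equation is again uniformly parabolic with $\alpha$-H\"older coefficients (whose bounds depend on $\varepsilon$ and on those of $\mathcal{L}_y$), and one applies a boundary version of the interior Schauder estimate on half-cylinders sitting above $\{\tau = 0\}$ (no boundary data is needed, only the PDE). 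An alternative would be to use the Harnack inequality of the previous subsection to dominate $\sup_{Q_{2r}} |u|$, for cylinders that straddle the parabola, by a multiple of $\sup_{D_\varepsilon^{T+\delta}} |u|$ with constant depending on $\varepsilon$. Finally, to reassemble the local bounds into the global estimate on $D_\varepsilon^T$, the sup norms combine trivially, while the H\"older seminorms are controlled by splitting pairs of points into those within parabolic distance $r$ (governed by the local Schauder seminorms) and those at larger distance (estimated crudely by $2\sup|\cdot|/r^\alpha$, absorbed into $C(\varepsilon, T, \delta)$). Tracking the $j$-th derivative scaling factor $r^j = c(\varepsilon)^j\delta^{j/2}$ together with the H\"older correction $\delta^{\alpha/2}$ produces exactly the weights $1$, $\delta^{(1+\alpha)/2}$, $\delta^{1+\alpha/2}$, $\delta^{1+\alpha/2}$ in the stated estimate, up to constants absorbed into the final $C(\varepsilon, T, \delta)$.
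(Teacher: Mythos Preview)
The paper does not actually prove this statement: its entire proof is the one-line citation ``See the article of Knerr \cite{Schauder}.'' Your sketch therefore goes well beyond what the paper offers, and the local Schauder step together with the covering of the safely interior region $\mathcal{A}$ is carried out correctly.

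The genuine gap is in your treatment of the strip $\mathcal{B}$ near the parabola. The change of variables $\tau = t - \varepsilon\|y\|^2$ does \emph{not} produce a uniformly parabolic equation. Writing $\tilde y = y$, one has
\[
\partial_{y^i y^j} = \partial_{\tilde y^i \tilde y^j} - 2\varepsilon y^j \partial_{\tilde y^i}\partial_\tau - 2\varepsilon y^i \partial_{\tilde y^j}\partial_\tau + 4\varepsilon^2 y^i y^j \partial_{\tau}^2 - 2\varepsilon\delta_{ij}\partial_\tau,
\]
so the transformed operator picks up a term $4\varepsilon^2\bigl(\sum_{i,j} a^{ij}(y)y^i y^j\bigr)\partial_\tau^2$; it is second order in $\tau$ and in fact degenerate (its principal symbol vanishes along the direction coming from the original $t$-axis). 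No parabolic Schauder theory applies after this substitution. Independently, the assertion that a ``boundary version of the interior Schauder estimate'' holds on half-cylinders with ``no boundary data needed, only the PDE'' is false: interior parabolic estimates degenerate at the initial hypersurface, and estimates up to $\{\tau=0\}$ require prescribed data there.

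The clean way out, and the one implicitly used in the paper, is to exploit that in the intended application $u$ solves the PDE on all of $(0,\infty)\times\RR^n$ and that $\|y_0\|\le\sqrt{T/\varepsilon}$ is uniformly bounded on $D_\varepsilon^T$. For the backward equation the relevant cylinder at $(t_0,y_0)$ is $[t_0,t_0+4r^2)\times B_{2r}(y_0)$, which can always be placed inside $D_{\varepsilon'}^{T+\delta}$ for some fixed $\varepsilon'<\varepsilon$ (a wider parabola) and a uniform $r=r(\varepsilon,T,\delta)$. This is precisely how the estimate is invoked in Lemma~\ref{le:H.compact}, where Harnack already supplies uniform $\sup$-bounds on every $M_\alpha^R$, not merely on $D_\varepsilon^{T+\delta}$ with the same $\varepsilon$. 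Read literally, the theorem is best understood with the right-hand side taken over $D_{\varepsilon'}^{T+\delta}$ for some $\varepsilon'<\varepsilon$ absorbed into the constant $C$; your argument would go through once you make this enlargement instead of attempting to straighten the shared boundary.
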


\begin{proof} See the article of Knerr \cite{Schauder}.
\end{proof}

\subsection{Elliptic PDE}
We now consider the question of positive solutions of the elliptic equation (\ref{eq.elliptic}), with the differential operator $\mathcal L_y$ defined in (\ref{eq.Ly.def}).
\begin{theorem}
If the operator $\mathcal L_y - \lambda $ has a Green's function then equation (\ref{eq.elliptic}) has a positive solution.
A sufficient condition for the existence of a Green's function is 
$$
\int_0^{\infty} \EE_x\left[ e^{ \int_0^t  c(X_s)ds - \lambda t} \right] dt  < \infty,
$$
for all $x \in \RR^n$, where $(X_t)_{t \ge 0}$ is the diffusion with generator
$$
\mathcal L^0_y = \mathcal L_y - c(y).
$$
\end{theorem}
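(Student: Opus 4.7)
The plan is to treat the theorem's two assertions separately, using different tools for each.

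For the first assertion -- that the existence of a Green's function of $\mathcal L_y - \lambda$ yields a positive classical solution of $(\mathcal L_y - \lambda)\psi = 0$ -- I would exhaust $\RR^n$ by balls $B_R$ and push a singularity off to infinity. Fix a reference point $x_0$, and for each $R$ let $G_R$ be the Dirichlet Green's function of $\mathcal L_y - \lambda$ on $B_R$. Pick any $z_R$ with $\|z_R\|=R$ and define the normalized Martin kernel $\psi_R(y) := G_R(y,z_R)/G_R(x_0,z_R)$. Each $\psi_R$ is a positive classical solution of $(\mathcal L_y - \lambda)\psi_R = 0$ on $B_R\setminus\{z_R\}$ with $\psi_R(x_0)=1$. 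Under the uniform ellipticity and H\"older bounds built into (\ref{eq.Ly.def}), the elliptic Harnack inequality gives local uniform boundedness of $\{\psi_R\}$ on any fixed compact set (for all $R$ large enough that the singularity $z_R$ lies outside), and interior Schauder estimates then give $C^{2,\alpha}_{\mathrm{loc}}$ compactness. A diagonal subsequence converges to some $\psi \in C^2(\RR^n)$ solving $(\mathcal L_y - \lambda)\psi = 0$ with $\psi(x_0) = 1$, and a second application of Harnack forces $\psi>0$ everywhere.

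For the second assertion, I would construct the Green's function directly via the Feynman-Kac semigroup associated with the diffusion $(X_t)$ having generator $\mathcal L^0_y = \mathcal L_y - c$. Define
$$T_t f(x) := \EE_x\!\left[f(X_t)\,\exp\!\left(\int_0^t c(X_s)\,ds - \lambda t\right)\right],$$
a strongly continuous positive semigroup whose infinitesimal generator is $\mathcal L_y - \lambda$ on a suitable core. The hypothesis is exactly that $u(x) := \int_0^\infty T_t 1(x)\,dt < \infty$ for every $x$, so formally $(\mathcal L_y - \lambda) u = -1$. Writing $T_t f(x) = \int p_t^{c,\lambda}(x,y) f(y)\,dy$ for the twisted transition kernel (which admits a smooth density thanks to uniform ellipticity of $\mathcal L^0_y$), the natural candidate for the Green's function is
$$G(x,y) := \int_0^\infty p_t^{c,\lambda}(x,y)\,dt,$$
and one then checks that $(\mathcal L_y - \lambda) \int G(\cdot,y) f(y)\,dy = -f$ holds classically for every nonnegative $f$ with compact support.

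The main obstacle in each part is making the formal manipulations rigorous. For part (a), the delicate point is that the Martin-kernel limit $\psi$ is a genuine $C^2$ classical solution on all of $\RR^n$ and not identically zero; the normalization $\psi_R(x_0)=1$ together with Harnack handles nondegeneracy, while Schauder regularity, available under the H\"older-continuity and boundedness of the coefficients of $\mathcal L_y$, handles smoothness and passage to the limit in the PDE. For part (b), one must first show that $T_t 1$ smooths into $C^{2,\alpha}_{\mathrm{loc}}$ (again using uniform parabolicity and H\"older coefficients), and then invoke dominated convergence -- enabled precisely by the integrability hypothesis -- to push the elliptic operator inside the time integral defining $u$. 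Both steps are classical in the diffusion theory of \cite{Pinsky}, and they rely on no estimates beyond the Harnack and Schauder inequalities already quoted in Appendix A.
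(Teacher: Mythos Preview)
Your proposal is correct and follows the standard route: the paper itself does not give a proof but simply defers to Theorems~3.1 and~3.6 in Section~4.3 of Pinsky~\cite{Pinsky}, and what you have sketched is precisely the content of those results --- the Martin-kernel compactness argument (Harnack plus Schauder on an exhaustion by balls) for the first assertion, and the Feynman--Kac construction of the Green's function for the second. In effect you have unpacked the citation rather than taken a different path.
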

\begin{proof}   See Theorems $3.1$ and $3.6$ in Section $4.3$ of Pinsky's book \cite{Pinsky}. 
\end{proof}

\subsection{Vector integration}
Now we recall the construction of the Bochner integral as needed in Section $\ref{se:timeReversedHeat}$. Let $(F, \mathcal F, \mu)$ be a measurable space (with a finite measure $\mu$) and let $B$ be a Banach space with norm $\| \cdot \|$.  For simple functions of the form
$$
g = \sum_{i=1}^N b_i \bone_{F_i}
$$
where $F_i \in \mathcal F$ and $b_i \in B$ for each $i$, we let
$$
\int g \ d\mu =  \sum_{i=1}^N  b_i \mu(F_i).
$$
To define the Bochner integral of a general function $g: F\rightarrow B$, we consider a sequence of simple functions $g_n$ such that
$$
\int \| g- g_n \|  d\mu \to 0,
$$
as $n \to \infty$.
Then, the integral $\int_F g d\mu$ is defined as the limit of the sequence of integrals $\int_F g_n \ d\mu$, which converges in the strong topology of $B$.
It is easy to show (cf. \cite{SwartzFuncAn}) that, whenever $\int \| g  \|  d\mu  < \infty$, such sequence of simple functions $g_n$ does exist, and the limit of $\int g_n \ d\mu$ depends only on the function $g$, but 
not on the particular choice of the sequence.

Like the Lebesgue integral, the Bochner interal is rather robust.  A particular instance of this 
robustness is that we can interchange integration and linear functionals.
\begin{theorem}(Hille)
Let $g$ be a Bochner integrable function and $T:B\rightarrow\RR$ be a continuous linear functional. Then
$$
T \int_F g d\mu = \int_F T(g) d\mu
$$
\end{theorem}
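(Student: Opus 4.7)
The plan is to verify directly that each integrand in \eqref{eq.represent.ext} is a classical solution of \eqref{eq.reversedHeatEq.mod}, and then justify passing the differential operator $\partial_t + \mathcal{L}_{yz}$ through the integral sign using Hille's theorem (or, equivalently, a dominated convergence argument) in the spirit of the proof of Theorem \ref{th:main}.

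\textbf{Step 1 (The boundary/positivity conditions).} Since $\psi\geq0$, $e^{-t\lambda-z\theta}\geq0$, and $\mu$ is a positive measure, nonnegativity of $u$ is immediate. For the normalization, observe that $\psi(\lambda,\theta;0)=1$ for $\mu$-a.e.\ $(\lambda,\theta)$, so $u(0,0,0)=\int_{\RR^2}1\,\mu(d\lambda,d\theta)=1$, using that $\mu$ is a probability measure.

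\textbf{Step 2 (Pointwise PDE for the integrand).} For each $(\lambda,\theta)\in\RR^2$, set $v_{\lambda,\theta}(t,y,z)=e^{-t\lambda-z\theta}\psi(\lambda,\theta;y)$. A direct computation, using $\partial_t v_{\lambda,\theta}=-\lambda v_{\lambda,\theta}$, $\partial_z v_{\lambda,\theta}=-\theta v_{\lambda,\theta}$, $\partial^2_{zz}v_{\lambda,\theta}=\theta^2 v_{\lambda,\theta}$, $\partial^2_{zy^i}v_{\lambda,\theta}=-\theta e^{-t\lambda-z\theta}\partial_{y^i}\psi$, and the definition of $\mathcal{L}_{yz}$, gives
\begin{equation*}
(\partial_t+\mathcal{L}_{yz})v_{\lambda,\theta}(t,y,z)
=e^{-t\lambda-z\theta}\Bigl[\bigl(\mathcal{L}_y-\theta\sum_{i=1}^n q^i(y)\partial_{y^i}+\theta^2 p(y)-\theta r(y)-\lambda\bigr)\psi(\lambda,\theta;y)\Bigr],
\end{equation*}
which vanishes by \eqref{eq.elliptic.ext} for $\mu$-a.e.\ $(\lambda,\theta)$.

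\textbf{Step 3 (Differentiation under the integral).} Fix $(t_0,y_0,z_0)\in(0,\infty)\times\RR^{n+1}$ and a closed neighborhood $K=[t_0-\varepsilon,t_0+\varepsilon]\times\overline{B}(y_0,\varepsilon)\times[z_0-\varepsilon,z_0+\varepsilon]\subset(0,\infty)\times\RR^{n+1}$. I claim that the map $(\lambda,\theta)\mapsto v_{\lambda,\theta}\big|_K$ is Bochner integrable into the Banach space $C^{1,2}(K)$ equipped with its usual norm (involving $v$, $\partial_t v$, $\partial_{y^i}v$, $\partial^2_{y^iy^j}v$, $\partial_z v$, $\partial^2_{zz}v$, $\partial^2_{zy^i}v$). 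Each derivative of $v_{\lambda,\theta}$ on $K$ is bounded by a polynomial in $|\lambda|,|\theta|$ times $\sup_K e^{-t\lambda-z\theta}\cdot\|\psi(\lambda,\theta;\cdot)\|_{C^2(\overline{B}(y_0,\varepsilon))}$, and on $K$ the exponential factor is itself bounded by a constant multiple of $e^{-(t_0-\varepsilon)\lambda-(z_0-\varepsilon\,\mathrm{sgn}\,\theta)\theta}$. The finiteness of $u$ at the eight corner points of $K$ provides $\mu$-integrability of $e^{-t\lambda-z\theta}\psi(\lambda,\theta;y_0)$ on $K$, and combined with the hypothesis $\psi\in L^{1}(\RR^2\to C^2(\overline{B}(y_0,\varepsilon));\mu)$ and absorption of polynomial factors $|\lambda|^k|\theta|^m$ into neighbouring exponentials by slightly shrinking $K$, this yields the required $\mu$-integrable majorant. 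Hille's theorem (Appendix A) then allows us to commute each of the bounded linear evaluation functionals $v\mapsto(\partial^\alpha v)(t,y,z)$ on $C^{1,2}(K)$ with the integral over $\mu$, so that $u\in C^{1,2}(K)$ with derivatives obtained by differentiating under the integral.

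\textbf{Step 4 (Assemble).} Combining Steps 2 and 3, for every $(t,y,z)\in(0,\infty)\times\RR^{n+1}$,
\begin{equation*}
(\partial_t+\mathcal{L}_{yz})u(t,y,z)=\int_{\RR^2}(\partial_t+\mathcal{L}_{yz})v_{\lambda,\theta}(t,y,z)\,\mu(d\lambda,d\theta)=0,
\end{equation*}
so $u$ is a nonnegative classical solution of \eqref{eq.reversedHeatEq.mod} with $u(0,0,0)=1$. The main (only) technical obstacle is Step 3: producing locally $\mu$-integrable majorants for the derivatives $\lambda^k\theta^m e^{-t\lambda-z\theta}\psi$, which is handled by the trick of enlarging the compact slightly and absorbing polynomial prefactors into the resulting stronger exponential decay.
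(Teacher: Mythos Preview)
Your proposal does not prove the stated theorem. The statement you were asked to prove is Hille's theorem: that a continuous linear functional commutes with the Bochner integral. Instead, you have written a proof of Theorem~\ref{th:main.2} (that the function $u$ defined by \eqref{eq.represent.ext} is a classical solution of \eqref{eq.reversedHeatEq.mod}), and in Step~3 you \emph{invoke} Hille's theorem as a tool rather than establishing it. This is a circular use of the result you were meant to prove.

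For the record, the paper does not give its own proof of Hille's theorem either; it simply cites a standard reference. A direct proof goes as follows: for a simple function $g=\sum_{i=1}^N b_i\bone_{F_i}$ the identity $T\int g\,d\mu=\sum_i T(b_i)\mu(F_i)=\int T(g)\,d\mu$ is immediate from linearity of $T$; for general Bochner-integrable $g$, choose simple $g_n$ with $\int\|g-g_n\|\,d\mu\to0$, use continuity of $T$ to pass to the limit on the left, and use $|T(g)-T(g_n)|\le\|T\|\,\|g-g_n\|$ together with dominated convergence (or the definition of the scalar integral) on the right.

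As an aside, if your intended target was Theorem~\ref{th:main.2}, then your argument is in the same spirit as the paper's one-line proof (``a trivial application of the Hille's and Fubini's theorems''), only with more detail supplied for the differentiation-under-the-integral step.
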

\begin{proof} This result can also be found \cite{SwartzFuncAn}.
\end{proof}

\section{Appendix B}

\subsection{Proof of Proposition $\ref{prop:ex1.1}$} 
First, using the definition of $\tV$ and equation (\ref{eq.ex1.temp1}), we obtain the following PDE for $\tV$: 
\begin{eqnarray}
&&\label{eq.complete.Vx}
\tV_t+ \frac{1}{2}\sigma^2 \tV_{yy}  + (a - b y)\tV_y
+ \frac{1}{2} \frac{\tV_{xx}}{\tV^2_{x}} \left(\sigma \tV_{xy} + \frac{a + \sigma^2/2 - by}{\sigma} \tV_x\right)^2\\
&&\phantom{????????????????} - \frac{1}{\tV_x}\left(\sigma \tV_{xy} + \frac{a + \sigma^2/2 - by}{\sigma}\tV_x \right) 
\left(\sigma \tV_y + \frac{a + \sigma^2/2 - by}{\sigma} \tV \right) = 0\nonumber
\end{eqnarray}
It is a standard exercise to check that the left hand side of the above is the $x$-derivative of the left hand side of the HJB equation (\ref{eq.illPosedHJB.complete}), with $V$ given by (\ref{eq.ex1.tV.def}). Thus, in order to prove that $V$ solves (\ref{eq.illPosedHJB.complete}), it only remains to show that the value of the left hand side of (\ref{eq.illPosedHJB.complete}), with $V$ given by (\ref{eq.ex1.tV.def}), converges to zero, as $x\downarrow0$.
For this, we need to establish the appropriate estimates of the partial derivatives of $\tV$ and, in turn, of $V$.

Assume that the measures $\nu^+$ and $\nu^-$ have supports in $[1+\eta,1/\eta]$, for some $\eta\in(0,1/2)$, and at least one of these measures is not identically zero (if they are both zero, then, the statement is obvious).
It follows from (\ref{eq.ex1.u}) that there exists $c_1=c_1(t,y)\in(0,1)$, which is a continuous function of $(t,y)\in\RR_+\times\RR$, such that
\begin{equation*}
 c_1(t,y) \left(x^{-1-\eta}\wedge x^{-1/\eta}\right) \leq u(t,y,\log(x)) \leq \frac{1}{c_1(t,y)} \left(x^{-1-\eta}\vee x^{-1/\eta}\right),
 \,\,\,\,\,\,\,\,\forall x>0
\end{equation*}
This yields
\begin{equation}\label{eq.meanRev.tV.est}
\tV(t,y,x) \leq c_1^{-1/(1+\eta)}(t,y) x^{-1/(1+\eta)} + c_1^{-\eta}(t,y) x^{-\eta},
\,\,\,\,\,\,\,\,\,\,\forall (t,y,x)\in\RR_+\times\RR\times(0,\infty)
\end{equation}
It is also easy to see, using (\ref{eq.ex1.u}), that there exists $c_2>0$, depending only upon $a$, $b$, $\sigma$ and $\eta$, such that
\begin{equation*}
\eta\leq-\frac{u(t,y,z)}{u_z(t,y,z)} \leq \frac{1}{1+\eta}\,\,\,\,\,\text{and}\,\,\,\,\,\,\, \left|\frac{u_y(t,y,z)}{u(t,y,z)}\right| \leq c_2\left(1+|y|\right)
\end{equation*}
hold for all $(t,y,z)\in \RR_+\times\RR^2$.
It follows that
\begin{equation}\label{eq.meanRev.ratioEst}
(1+\eta) x\leq-\frac{\tV(t,y,x)}{\tV_{x}(t,y,x)} \leq \frac{1}{\eta} x,\,\,\,\,\,\text{and}\,\,\,\,\,\,\, \left|\frac{\tV_{y}(t,y,x)}{\tV_{x}(t,y,x)}\right| \leq c_2 \left(1+|y|\right) x
\end{equation}
Similarly, we deduce that
\begin{equation*}
\left|\frac{u_{zz}(t,y,z)}{u_z(t,y,z)}\right| \leq \frac{1}{\eta}\,\,\,\,\,\text{and}\,\,\,\,\,\,\, \left|\frac{u_{yy}(t,y,z)}{u(t,y,z)}\right| \leq c_3\left(1+y^2\right),
\end{equation*}
where $c_3>0$ depends only upon $a$, $b$, $\sigma$ and $\eta$.
Next, we recall from (\ref{eq.ex1.tV.def}) that
\begin{equation*}
e^{-z}\tV_{yy}\left(t,y,u(t,y,z)\right) =  -\frac{u_y^2}{u_z^2} \frac{u_{zz}-u_z}{u_z} + 2 \frac{u_y}{u_z} \frac{u_{yz}}{u_z} - \frac{u_{yy}}{u_z},
\end{equation*}
to obtain
\begin{equation}\label{eq.meanRev.tVyy.est}
\left|\tV_{yy}\left(t,y,x\right)\right| \leq c_4 (1+y^2) x,\,\,\,\,\,\,\,\,\,\,\forall (t,y,x)\in\RR_+\times\RR\times(0,\infty),
\end{equation}
where $c_4>0$ depends only upon $a$, $b$, $\sigma$ and $\eta$. The estimates (\ref{eq.meanRev.tV.est}), (\ref{eq.meanRev.ratioEst}) and (\ref{eq.meanRev.tVyy.est}), along with the Fubini's theorem, imply that $V(t,y,x)$ is well defined, with its $y$-derivatives are given by:
\begin{equation*}
V_y(t,y,x) =\int_0^x \tV_y(t,y,s) ds,\,\,\,\,\,\,\,\, V_{yy}(t,y,x) =\int_0^x \tV_{yy}(t,y,s) ds.
\end{equation*}
Applying the same estimates and the Fubini's theorem again, we conclude that the right hand side of (\ref{eq.illPosedHJB.complete}), with $V$ given by (\ref{eq.ex1.tV.def}), converges to zero, as $x\downarrow0$. This completes the proof of the proposition.

\subsection{Proof of Proposition $\ref{prop:ex1.2}$} 
Recall, from the results discussed in Subsection $\ref{subse:forwPerf}$, that the process $\left(V(t,Y_{t},X^{*}_{t})\right)_{t\geq0}$ is a local martingale. Let us show that it is, in fact, a true martingale.
Applying the It\^o's lemma, we obtain
\begin{equation*}
d\log V(t,Y_{t},X^{*}_{t})= -\frac{1}{2}Z^2_t dt + Z_t dW_t,
\end{equation*}
where
\begin{equation*}
Z_t = \sigma \frac{V_y(t,Y_t,X^*_t)}{V(t,Y_t,X^*_t)} - \frac{\tV(t,Y_t,X^*_t)}{V(t,Y_t,X^*_t)} \frac{\frac{1}{\sigma} \left(a + \frac{1}{2}\sigma^2 - bY_t\right) \tV(t,Y_t,X^*_t) + \sigma \tV_{y}(t,Y_t,X^*_t) }{\tV_{x}(t,Y_t,X^*_t) }
\end{equation*}
Applying (\ref{eq.meanRev.ratioEst}), we obtain:
\begin{eqnarray*}
&&V(t,y,x) \leq -\frac{1}{\eta} \int_0^x s \tV_{x}(t,y,s) ds = -\frac{1}{\eta} x \tV(t,y,x) + \frac{1}{\eta} V(t,y,x)
\,\,\,\,\,\Rightarrow\,\,\,\,\frac{\tV(t,y,x)}{V(t,y,x)} \leq \frac{1-\eta}{x} ,\\
&& \left|V_{y}(t,y,x)\right| \leq -c_2 \left(1+|y|\right) \int_0^x s \tV_{x}(t,y,s) ds = - c_2 \left(1+|y|\right) x \tV(t,y,x)
+ c_2 \left(1+|y|\right) V(t,y,x)\\
&&\,\,\,\,\,\, \Rightarrow\,\,\,\,\,\,\,\left|\frac{V_y(t,Y_t,X^*_t)}{V(t,Y_t,X^*_t)}\right| \leq c_2 \left(1+|y|\right)
\end{eqnarray*}
The above inequalities and (\ref{eq.meanRev.ratioEst}) imply that
\begin{equation}\label{eq.ex1.Zest}
|Z_t|\leq c_6 \left(1+|Y_t|\right)
\end{equation}
Next, we use the Novikov's condition (more precisely, the ``salami" method, given, for example, in Corollary 5.14 in \cite{KarShreve}) to conclude that $V(t,Y_{t},X^{*}_{t})$ is a true martingale. According to this method, we only need to verify that, for any $T>0$, there exists $\Delta>0$, such that
\begin{equation*}
\EE \exp\left(\frac{1}{2}\int_{t}^{t+\Delta} Z^2_s ds\right) <\infty,
\end{equation*}
for all $t\in[0,T]$. Using (\ref{eq.ex1.Zest}) and the representation of an Ornstein-Uhlenbeck process as a time-changed Brownian motion, we obtain
\begin{eqnarray*}
&&\exp\left(\frac{1}{2}\int_{t}^{t+\Delta} Z^2_s ds\right) \leq c_7 \exp\left(\frac{1}{2}\int_{t}^{t+\Delta} Y^2_s ds\right)\\
&&\phantom{??????????}\leq c_8 \exp\left(c_9 \int_{t}^{t+\Delta} W^2_{\exp(2bs)-1} e^{-bs} ds\right)
\leq c_8 \exp\left(c_9 \Delta \sup_{s\in[0,\exp(2bT)]} W^2_{s}\right)
\end{eqnarray*}
It is easy to see that we can choose $\Delta>0$ small enough, so that the right hand side of the above is integrable. This completes the construction.

\subsection{Proof of Proposition $\ref{prop:ex2.1}$}
Applying the It\^o's formula, we obtain
\begin{equation*}
d\log V(t,Y_{t},X^{*}_{t})= -\frac{1}{2}\left(Z^2_t + N^2_t\right) dt + Z_t dW^1_t + N_t dW^2_t,
\end{equation*}
where
\begin{eqnarray*}
&& Z_t := \sigma\rho \frac{ u_y\left(t,Y_t\right) }{ u\left(t,Y_t\right) }  + \frac{\gamma}{1-\gamma} \left(\kappa-\mu Y_t  + \sigma\rho
\frac{ u_y\left(t,Y_t\right) }{ u\left(t,Y_t\right) }  \right),\,\,\,\,\,\,\,\, 
N_t = \sigma\sqrt{1-\rho^2}\delta \frac{ u_y\left(t,Y_t\right) }{ u\left(t,Y_t\right) }
\end{eqnarray*}
The estimate (\ref{eq.ex2.uy}) yields $|Z_t| + |N_t| \leq c_7 \left(1+|Y_t|\right)$.
Repeating the last argument in the proof of Proposition $\ref{prop:ex1.2}$, given above, we conclude that $V(t,Y_{t},X^{*}_{t})$ is, indeed, a true martingale.


\bibliographystyle{plain}
\bibliography{ForwardHJB_lin_refs}

\end{document}